\newtheorem{prop}{Proposition}
\newtheorem{rem}{Remark}
\newcommand{\figref}[1]{\figurename~\ref{#1}}
\newcommand{\bx}{\boldsymbol{{x}}}
\newcommand{\bp}{\boldsymbol{{p}}}
\newtheorem{theorem}{Theorem}[section]
\newtheorem{lemma}[theorem]{Lemma}
\newtheorem{definition}{Definition}[section]
\begin{document} 

\numberwithin{equation}{section}

\thispagestyle{empty}
\begin{center}
\Large{{\bf Racah Algebra $R(n)$ from Coalgebraic Structures and Chains of $R(3)$ Substructures}}
\end{center}
\vskip 0.5cm
\begin{center}
\textsc{Danilo Latini, Ian Marquette and Yao-Zhong Zhang}
\end{center}
\begin{center}
School of Mathematics and Physics, The University of Queensland, Brisbane, QLD 4072, Australia
\end{center}
\vskip 0.5cm

\vskip  1cm
\hrule
\begin{center}
\textsf{{\bf abstract}}
\end{center}
\begin{abstract}
	\noindent   The recent interest in the study of higher-rank polynomial algebras related to $n$-dimensional classical and quantum superintegrable systems with coalgebra symmetry  and their connection with the generalised Racah algebra $R(n)$, a higher-rank generalisation of the rank one Racah algebra $R(3)$, raises the problem of understanding the role played by the $n - 2$ quadratic subalgebras generated by the left and right Casimir invariants (sometimes referred as \emph{universal quadratic substructures}) from this new perspective. Such subalgebra structures play a signficant role in the algebraic derivation of spectrum of quantum superintegrable systems. In this work, we tackle this problem and show that the above quadratic subalgebra structures can be understood, at a fixed $n > 3$, as the images of $n - 2$ injective morphisms of $R(3)$ into $R(n)$. We show that each of the $n-2$  quadratic subalgebras is isomorphic to the rank one Racah algebra $R(3)$. As a byproduct, we also obtain an equivalent presentation for the universal quadratic subtructures generated by the partial Casimir invariants of the coalgebra. The construction, which relies on explicit (symplectic or differential) realisations of the generators, is performed in both the classical and the quantum cases. 
\end{abstract}
\vskip 0.35cm
\hrule

%
%
%
%
%
\section{Introduction}
\label{sec1}

Superintegrable systems represent an exceptional subset of integrable Hamiltonian systems. For a classical Hamiltonian system $H=H(x_1, \dots, x_n, p_1, \dots, p_n)$ with $n$ degrees of freedom, the notion of integrability requires the existence of $n$ integrals of motion, i.e. $n$ well-defined functionally independent functions on the phase-space, say  $\{X_1, \dots, X_n\}$, that Poisson commute with the Hamiltonian $X_1:=H$, i.e. $\{X_1, X_i\}=0$ for $i=1, \dots, n$. Moreover, the above constants have to be in involution, so that $\{X_i, X_j\}=0$ for $1 \leq i,j\leq  n$.
The quantum counterpart of these models also exists. In this case, it is required the existence of $n$ \textquotedblleft quantum integrals of motion\textquotedblright, i.e. $n$ well-defined algebraically independent Hermitian operators $\{\hat{X}_1, \dots, \hat{X}_n\}$ on a given Hilbert space, which commute with the quantum Hamiltonian $\hat{H}=\hat{H}(\hat{x}_1, \dots, \hat{x}_n, \hat{p}_1, \dots, \hat{p}_n)$, i.e. $[\hat{X}_1, \hat{X}_i]=0$, with $\hat{X}_1:= \hat{H}$. Moreover, they have to commute pair-wise, meaning that $[\hat{X}_i, \hat{X}_j]=0$ for $1 \leq i,j \leq n$.

 When $k$ functionally/algebraically independent additional classical/quantum integrals of motion arise, i.e. there exists another set $\{Y_1, \dots, Y_k\}$/$\{\hat{Y}_1, \dots, \hat{Y}_k\}$, with $1 \leq k \leq n-1$, such as $\{Y_i, H\}=0$/$[\hat{Y}_i,\hat{H}]=0$ for $i=1, \dots, k$ we come to the definition of classical/quantum superintegrability. In particular, for $k=1$ we deal with \emph{minimal superintegrability} whereas if $k=n-1$ with \emph{maximal superintegrability}\footnote{Here, and throughout the paper, we shall assume the classical/quantum integrals to be polynomials in the momenta/finite-order differential operators, so that we are implicitly defining, following the terminology used in \cite{MillerPostWinternitz2013R}, polynomial superintegrability/superintegrability of finite-order. In this perspective, the order as a polynomial in the momenta/as a linear differential operator defines the order of the classical/quantum integral of motion.}. It is also common to call \emph{quasi-maximally superintegrable} those systems for which just one classical/quantum integral is missing, so that one has a total number of $2n-2$ functionally/algebraically independent classical/quantum integrals (the Hamiltonian included). In the maximally superintegrable case, it is not difficult to imagine that the existence of such a large number of symmetries for a given Hamiltonian system will make it very special, as many unique properties will arise.  For example, finite closed trajectories and periodic motion in classical mechanics or accidental degeneracy of the energy spectrum in quantum mechanics, just to cite a few of them. For a state-of-the-art perspective on superintegrability we refer the reader to \cite{MillerPostWinternitz2013R} where an exhaustive description (mostly focused on the algebraic side) of  superintegrable models and their properties can be found, also in connection to special functions theory, together with several explicit examples. For a more geometrical point of view we refer the reader to \cite{2014RCD....19..415B}.
 
  From the algebraic perspective, the integrals of motion of a $n$D superintegrable system usually close to higher rank polynomial algebras (see for example \cite{Hoque_2015, 1751-8121-49-12-125201, Hoque_2015_, Hoque2018, Chen_2019, Chen_2019_}). This is contrast to the purely integrable case for which all the integrals of motion just define abelian algebras. The richness of unique properties these classical and quantum models show justified the beginning of their classification \cite{Kalnins05_, Kalnins05_2, Kalnins05_3, Kalnins06_4, Kalnins06_5}. Remarkably, for the two-dimensional case, it has been proved that all second-order superintegrable systems can be obtained from a unique model known as \emph{the generic superintegrable system on the 2-sphere} \cite{Kalnins_2013}. A higher dimensional extension of this model, \emph{the generic superintergrable system on the $(n-1)$-sphere}, has been recently studied in a series of papers (see \cite{De_Bie_2017, Iliev_2017, bie2020racah, Gaboriaud_2019, Crampe2020} and references therein) devoted to the so-called generalised Racah algebra $R(n)$, a higher rank generalisation of the rank one Racah algebra $R(3)$.  Interestingly, the authors proved that the above algebraic structure is naturally realised as the symmetry algebra of the above-mentioned superintegrable model.
 
 In a recent paper \cite{latini2020embedding}, we have shown that the above algebraic structure also appears inside a larger symmetry algebra (with additional generators besides the ones associated to the Racah sector) related to two specific Euclidean models with non-central terms, the Smorodinsky-Winternitz system \cite{PhysRevA.41.5666, FRIS1965354, Makarov1967, EVANS1990483, doi:10.1063/1.529449, Ballesteros_2004} and the generalized Kepler-Coulomb system  \cite{doi:10.1063/1.2840465,1751-8121-42-24-245203, 2011SIGMA...7..054T}. More generally, it is realised in classical/quantum mechanics as the quadratic Poisson/associative algebra of  classical/quantum integrals of motion arising for a quasi-maximally superintegrable family of $n$D Euclidean classical/quantum Hamiltonian systems characterised by a radial potential $V=V(r)$  and $n$ additional non-central terms of the type $a_i/x_i^2$ ($i=1, \dots, n$) which break the radial symmetry. Such a family of Hamiltonians is included in the subset of $n$D superintegrable systems that are endowed with an $\mathfrak{sl}(2, \mathbb{R})$ \emph{coalgebra symmetry} \cite{Ballesteros1996, 0305-4470-31-16-009, 1742-6596-175-1-012004}.  For these models, two sets composed by $n-1$  functionally independent integrals of motion can be extracted from the so-called \textquotedblleft left\textquotedblright and \textquotedblleft right\textquotedblright coproducts of the non-linear Casimir of the $\mathfrak{sl}(2,\mathbb{R})$ coalgebra. One of the integrals is in common, so that one has a total number of $(n-1)+(n-1)-1=2n-3$ classical/quantum integrals of motion besides the Hamiltonian (one constant left for maximal superintegrability). These classical/quantum integrals, which arise under a given symplectic/differential realisation, turn out to be functionally/algebraically independent by construction.
 Such an approach to superintegrability, which holds for both classical and quantum systems because of its algebraic nature, has found application to a variety of problems, and many superintegrable systems have been understood in this framework, such as for example systems defined on non Euclidean spaces \cite{2008PhyD..237..505B,  Ballesteros_2008, BALLESTEROS2011, Ballesteros2009, Post_2015}, models with spin-orbital interactions \cite{Riglioni_2014} and discrete quantum mechanical models \cite{LATINI20163445}. 
 
A more recent application, deeply connected to the aim of the present work,  can be found in \cite{Latini_2019}. In that work, the chain structure of quadratic algebras with three generators discovered in \cite{10.1088/1751-8121/aac111} has been rephrased in the framework of coalgebra symmetry. Such algebraic structure has been used to construct algebraically the energy spectrum of a second-order $n$D Euclidean maximally superintegrable system with $n-1$ non-central terms (a quasi-generalised KC system) through deformed oscillator realisations of the chain structures and set of cubic Casimirs.  Such idea of substructures was originally introduced for a three-dimensional system, the nondegenerate (sometimes called generalised) Kepler-Coulomb model \cite{2011SIGMA...7..054T}.  The link with coalgebra symmetry led to the identification of the generators of part of the above chain of quadratic algebras ($n-2$ of the $n-1$ quadratic substructures forming the entire chain) with the left and right Casimir invariants of the coalgebra. As a consequence, the result has been generalised to all $n$D superintegrable models (both classical and quantum) possessing the same $\mathfrak{sl}(2, \mathbb{R})$ coalgebra symmetry. For this reason, we referred to the above chain of quadratic algebras as being \emph{universal}. The full chain of $n-1$ quadratic algebras used in \cite{10.1088/1751-8121/aac111} to construct the spectrum of the $n$D model algebraically also include an additional constant besides the left and right Casimirs. This is related to the specific model investigated. In fact, it was second-order maximally superintegrable and, because of this, admitted an additional integral of motion (a generalisation of the Laplace-Runge-Lenz (LRL) vector \cite{doi:10.1119/1.9745,doi:10.1119/1.10202, goldstein2002classical}). However, here we are only interested to investigate, at a fixed $n \geq 3$, the $n-2$ quadratic structures generated by the above left and right constants. Additional commutation relations appear when maximally superintegrable subcases are considered. 

The connection observed between the coalgebra symmetry and the substructures needs to be further explained. In this paper we plan to connect explicitly these quadratic chain structures of three generators to the Racah algebra. 

Specifically, the aim of this work is to show that the above quadratic   substructures can be understood as being  the  subalgebras of $R(n)$ arising as the images of $n-2$ injective morphisms of $R(3)$ into $R(n)$ and that, as a direct consequence, they turn out to be isomorphic to (a classical/quantum realisation of) the rank one Racah algebra $R(3)$, thus elucidating the deeper algebraic meaning behind the above substructures.
\vskip 0.1cm 
We organize the paper as follows:

\begin{itemize}
	\item In Section \ref{sec2} we recall briefly the general idea behind the coalgebra symmetry approach to superintegrable systems and, in particular, we focus on the definition of the left and right Casimir invariants. This is in order to introduce some background that will be useful to follow the developments of the work. We present the construction for Poisson coalgebras (the extension to noncommutative coalgebras is direct).
	\item In Section \ref{sec3} we specialise to the Lie-Poisson (co)algebra $\mathfrak{sl}(2,\mathbb{R})$ endowed with the (primitive) coassociative coproduct $\Delta$. After obtaining the corresponding left and right Casimir invariants, we elucidate their connection with the generators of (a Poisson analog) of the generalised Racah algebra $R(n)$ \cite{Kuru2020, latini2020embedding}. Finally, we proceed towards the main goal of the work and show that the quadratic substructures mentioned above can in fact be interpreted as a classical (Poisson) realisation of some specific Racah subalgebras of rank one. The latter can be understood, at a fixed $n > 3$, as the result of the application of a family of $n-2$ injective morphisms of $R(3)$ into $R(n)$.  As a consequence, we show that each of the $n-2$  quadratic subalgebras is isomorphic to a Poisson analog of the rank one Racah algebra $R(3)$. We conclude the Section by providing the classical Casimir functions associated to each of the $n-2$ classical quadratic algebras.
	
	\item In Section \ref{sec4} we deal with the quantum case. The starting point, to construct the left and right quantum integrals, is the Lie algebra $\mathfrak{sl}(2, \mathbb{R})$ endowed with the (primitive) coassociative coproduct $\Delta$. In this case, the realisation is given in terms of linear differential operators. We rely on injective morphisms to establish the explicit commutation relations of these substructures and extend the results to the quantum case. In particular, we construct the quantum analog of the substructures and show that they give rise to a quantum realisation of the rank one embedded Racah algebras. Again, we conclude the Section by providing the quantum Casimir operators associated to each of the above-mentioned quadratic algebras. We mention that the classical case can be obtained from the quantum case by performing the appropriate $\hbar \to 0$ limit, so that all the formulas appearing in the paper can be easily compared.
	
	\item Section \ref{sec5} is devoted to the concluding remarks.
\end{itemize}

\section{Coalgebra symmetry approach to superintegrability: left and right Casimir invariants}

\label{sec2}

 \noindent We start by recalling some definitions following  \cite{Ballesteros1996, 0305-4470-31-16-009, 1742-6596-175-1-012004}.
\begin{definition}
A coalgebra $(\mathfrak{A},\Delta)$ is a (unital, associative) algebra $\mathfrak{A}$ endowed with a coproduct map:

$$\Delta: \mathfrak{A} \to \mathfrak{A}  \otimes \mathfrak{A}  $$

which is coassociative:

$$(\Delta \otimes \text{id}) \circ \Delta = (\text{id} \otimes \Delta) \circ \Delta \, $$

meaning that the following diagram is commutative:
\begin{center}
	\begin{tikzcd}
\mathfrak{A}\arrow[d,"\Delta"]\arrow[r,"\Delta"] 
&	\mathfrak{A} \otimes \mathfrak{A} \arrow[d, "\Delta\otimes \text{id}"]
\\\mathfrak{A} \otimes \mathfrak{A}  \arrow[r,"\text{id} \otimes \Delta"]
&
\mathfrak{A} \otimes \mathfrak{A}\otimes \mathfrak{A}
\end{tikzcd}
\end{center}
and it is an algebra homomorphism from $\mathfrak{A}$ to  $\mathfrak{A} \otimes \mathfrak{A}$:
$$\Delta (X \cdot  Y) = \Delta (X)  \cdot \Delta (Y) \qquad \forall \, X, Y \in \mathfrak{A} \, .$$
\end{definition}
\begin{definition}
	$(\mathfrak{A}, \Delta)$ is a Poisson coalgebra if $\mathfrak{A}$ is a Poisson algebra and $\Delta$ is a Poisson homomorphism between $\mathfrak{A}$ and $\mathfrak{A} \otimes \mathfrak{A}$, i.e.:
	$$\Delta\bigl(\{X, Y\}_\mathfrak{A}\bigl)=\{\Delta(X),\Delta(Y)\}_{\mathfrak{A} \otimes\mathfrak{A}} \quad \forall \, X,Y \in \mathfrak{A} \, ,$$
\end{definition}

\noindent with respect to the standard Poisson structure on $\mathfrak{A} \otimes \mathfrak{A}$:
$$\{X \otimes Y, W \otimes Z\}_{\mathfrak{A} \otimes \mathfrak{A}}:=\{X, W\}_{\mathfrak{A}} \otimes YZ+XW \otimes \{Y,Z\}_{\mathfrak{A}} \, , \quad X,Y,Z,W \in \mathfrak{A} \, .$$

 \noindent  Let us consider a Poisson coalgebra with $\ell$ generators $J_i$ ($i=1, \dots, \ell$) and a non-linear Casimir $C(J_1, \dots, J_\ell)$. Starting from the coproduct $\Delta=\Delta^{[2]}$ it is possible to define the $m$-th coproduct map, with $2<m\leq n$ in two possible ways. The first one is through the following recursive application of $\Delta^{[2]}$ (which is both a Poisson and an algebra homomorphism):
 \begin{equation}
\Delta_L^{[m]}:=(id \otimes  \dots ^{m-2)}\otimes id \otimes \Delta^{[2]}) \circ \Delta_L^{[m-1]} \, ,
\label{eq:leftCas}
\end{equation}
whereas the second one, that comes from a generalization of the coassociativity property to an arbitrary number of tensor product of $\mathfrak{A}$, by the following recursive one:
\begin{equation}
\Delta^{[m]}_R:=(\Delta^{[2]}\otimes id  \otimes  \dots ^{m-2)} \otimes id) \circ \Delta_R^{[m-1]}  \, .
\label{rightCas}
\end{equation}

\noindent  The main point now resides in the fact that, although the $n$-th coproduct of any generator of the algebra is the same due to the coassociativity of the $n$-th coproduct ($\Delta^{[n]}_L(J_i)=\Delta_R^{[n]}(J_i)$), lower dimensional ones with $ 2 < m < n$ will not be equivalent. This is because, by labelling with $1 \otimes 2 \otimes \dots \otimes n$ the sites of the chain $\mathfrak{A} \otimes \mathfrak{A}  \otimes \dots^{n)} \otimes \mathfrak{A}$, the \emph{left} and \emph{right} coproducts will be defined on the tensor product space $1 \otimes 2 \otimes \dots \otimes m$ and $(n-m+1) \otimes (n-m+2) \otimes \dots \otimes n$, respectively.
As a consequence, the so-called \emph{left} and \emph{right} Casimirs arise:
\begin{align}
&C^{[m]}:= \Delta_L^{[m]}(C(J_1, \dots, J_\ell))=C(\Delta_L^{[m]}(J_1), \dots, \Delta_L^{[m]}(J_\ell)) \label{eq:casimirs1}\\
&C_{[m]}:= \Delta^{[m]}_R(C(J_1, \dots, J_\ell))=C(\Delta^{[m]}_R(J_1), \dots, \Delta^{[m]}_R(J_\ell)) \, ,
\label{eq:casimirs2}
\end{align}
 with  $C^{[n]}=C_{{[n]}}$ because of the coassociativity property. 
\noindent Here the main result: if we consider a $n$-site Hamiltonian $H^{[n]}$ defined on the chain $\mathfrak{A} \otimes \mathfrak{A} \otimes  \dots^{n)} \otimes \mathfrak{A}$, i.e. the $n$-th coproduct of a (smooth/formal power series) function of the generators $J_i$ (taken as the initial 1-site Hamiltonian), i.e.:
\begin{equation}
H^{[n]}:=H(\Delta_L^{[n]}(J_1), \dots, \Delta_L^{[n]}(J_\ell))=H(\Delta^{[n]}_R(J_1), \dots, \Delta^{[n]}_R(J_\ell)) \, ,
\label{Hamcoal}
\end{equation}
\noindent then, for $m,m'=1, \dots, n$, the following Poisson commutation relations hold:
\begin{align}
&\{C^{[m]}, H^{[n]}\}_{\mathfrak{A} \otimes \mathfrak{A}  \dots ^{n)}\otimes  \mathfrak{A} }\,\,=0 \, , \\
&\{C^{{[m]}}, C^{[m']}\}_{\mathfrak{A} \otimes \mathfrak{A}  \dots ^{n)}\otimes  \mathfrak{A} } = 0 \, ,
\end{align}
where $C^{[1]} \equiv C$, so that the set of left Casimirs Poisson commute with the $n$-sites Hamiltonian $H^{[n]}$. Moreover, they define a set of involutive functions. The explicit proof can be found in \cite{0305-4470-31-16-009}. The same result can be rephrased for the set of right Casimirs. In view of this, the idea is to introduce explicit systems for which these algebraic results apply. In particular, the connection with classical Hamiltonian systems relies on the use of symplectic realisations. They allow to link the abstract algebraic construction to explicit classical mechanical models. In particular, once a symplectic realisation $D: \mathfrak{A} \to C^{\infty}(x,p)$ is given, the Hamiltonian $H^{[n]}$ will result in a function of the $2n$ canonical coordinates $(x_i,p_i)$ and, at least in principle, will be superintegrable by construction (since it Poisson commutes with the left and right integrals of motion arising under the chosen symplectic realisation). We refer the reader to \cite{Ballesteros_2008} for a comprehensive discussion on the integrability conditions for these models. Once the realisation $D$ is fixed, the bracket is the canonical one, i.e.:
\begin{equation}
\{f, g\}_{\mathfrak{A} \otimes \dots ^{n)}\otimes \mathfrak{A}} = \sum_{j=1}^n\bigl(\partial_{x_j}f\partial_{p_j}g-\partial_{p_j}f\partial_{x_j}g\bigl) \, \qquad f,g \in C^{\infty}(\bx, \bp).
\label{eq:canpoi}
\end{equation}
	\noindent Notice that the above algebraic construction also works for quantum mechanical systems, where commutators replace Poisson brackets. Of course, usual ordering problems have to be taken into account due to noncommutativity. Also, the construction can be extended when we deal with a Lie algebra possessing more than just one Casimir, say $r$ Casimirs, among which $R$ of them being non-linear ones (they are the ones of interest in relation to integrability). For all the details we refer the reader to  \cite{Ballesteros1996, 0305-4470-31-16-009, 1742-6596-175-1-012004}, where several applications (for deformed coalgebras also) can be found.
	\section{Quadratic substructures generated by the left and right Casimirs as rank one Racah subalgebras: The classical case}
	\label{sec3}
\noindent For our purposes, we are interested in this work to the Lie-Poisson (co)algebra $\mathfrak{A}=\mathfrak{sl}(2,\mathbb{R})\simeq \mathfrak{su}(1,1)$. The latter is endowed with $\ell=3$ generators $(J_{\pm}, J_3)$ and it is defined through the following Lie-Poisson brackets:
\begin{equation}
\{J_-,J_+\}= 2 J_3 \quad \{J_3, J_{\pm}\}=\pm  J_\pm \, .
\label{eq:LiePoisson}
\end{equation}
The coproduct $\Delta: \mathfrak{A} \to  \mathfrak{A} \otimes  \mathfrak{A} $ is primitive:
\begin{equation}
\Delta(J_\sigma)=J_\sigma \otimes 1 + 1 \otimes J_\sigma \, , \qquad \Delta(1)=1 \otimes 1 \, ,
\end{equation}
\noindent and the (non-linear) Casimir read:
\begin{equation}
C(J_+,J_-,J_3)=J_3^2-J_+J_- \, .
\label{eq:casimir}
\end{equation}

\noindent A one-dimensional symplectic realisation $D$ for the generators of this Lie-Poisson coalgebra is:
\begin{equation}
J_+^{[1]} := D(J_+)=\frac{1}{2}\bigl(p_1^2+\frac{a_1}{x_1^2} \bigl)\quad J_-^{[1]}: =D(J_-)= \frac{1}{2}x_1^2\quad J_3^{[1]} :=D(J_3)= \frac{1}{2}x_1 p_1 \, .
\label{1Drea}
\end{equation}
Thus, under the canonical Poisson brackets, we realise the initial Lie-Poisson algebra:
\begin{equation}
\{J^{[1]}_-,J^{[1]}_+\}= 2 J^{[1]}_3 \quad \{J^{[1]}_3, J^{[1]}_{\pm}\}=\pm  J^{[1]}_\pm \, .
\label{1DreaPoisn}
\end{equation}
The Casimir function, at a fixed realisation, reads:
\begin{equation}
C^{[1]}:=D(C(J_+,J_-,J_3))=-a_1/4 \, .
\label{eq:cas1rea}
\end{equation}
The $n$D realisation of the generators is constructed by applying the $n$-th coproduct $\Delta^{[n]}: \mathfrak{A} \to  \mathfrak{A} \otimes  \mathfrak{A} \otimes \dots ^{n)}\otimes  \mathfrak{A}$ to the basis generators $J_{\sigma}$ ($\sigma=\pm,3$) and by taking the explicit realisation, so that:
\begin{equation}
J_+^{[n]} = \frac{1}{2}\bigl(\bp^2+\sum_{j=1}^n\frac{a_j}{x_j^2}\bigl) \quad J_-^{[n]} = \frac{1}{2}\bx^2 \quad J_3^{[n]} = \frac{1}{2}\bx \cdot \bp \, ,
\label{eq:nDrea}
\end{equation}
where we assume the $a_j$ to be arbitrary real parameters. This means that these new $nD$ generators close:
\begin{equation*}
\{J_-^{[n]},J_+^{[n]}\}=2 J_3^{[n]} \quad \{J_3^{[n]}, J_{\pm}^{[n]}\}=\pm  J^{[n]}_\pm \, ,
\label{eq:classnD}
\end{equation*}
\noindent through the $n$D canonical Poisson bracket \eqref{eq:canpoi}. The total Casimir $C^{[n]}$ at a fixed realisation reads:
\begin{equation}
C^{[n]}=-\frac{1}{4}\biggl(\sum_{1 \leq i< j}^n \biggl(L_{ij}^2+a_i \frac{x_j^2}{x_i^2}+a_j \frac{x_i^2}{x_j^2}\biggl)+\sum_{i=1}^n a_i\biggl) \, ,
\end{equation}
with $L_{ij} = x_i p_j - x_j p_i$. This object can be seen just as the common left-right Casimir obtained when $m=n$. In fact, the left and right Casimirs \eqref{eq:casimirs1}-\eqref{eq:casimirs2} give rise, at a fixed realisation, to the following left and right classical integrals of motion:
\begin{align}
&C^{[m]}=-\frac{1}{4}\biggl(\hskip 0.4cm\sum_{1 \leq i< j}^m \hskip 0.4cm\biggl(L_{ij}^2+a_i \frac{x_j^2}{x_i^2}+a_j \frac{x_i^2}{x_j^2}\biggl)+\hskip 0.4cm\sum_{i=1}^m \hskip 0.4cm a_i\biggl)\\
&C_{[m]}=-\frac{1}{4}\biggl(\sum_{n-m+1 \leq i< j}^n \biggl(L_{ij}^2+a_i \frac{x_j^2}{x_i^2}+a_j \frac{x_i^2}{x_j^2}\biggl)+\sum_{i=n-m+1}^n a_i\biggl) \, ,
\label{eq:leftrigh}
\end{align}
where $C^{[1]}=-a_1/4$, $C_{[1]}=-a_n/4$ are just constants and $C^{[n]}=C_{[n]}$ as a consequence of the coassociativity property of the coproduct $\Delta^{[n]}$ on $\mathfrak{A} \otimes \mathfrak{A} \otimes \dots ^{n)}\otimes \mathfrak{A}$. Therefore, as a consequence of the general theory, the above $2n- 3$ functions Poisson commute with any arbitrary smooth function $H^{[n]}=H(J_+^{[n]}, J_-^{[n]}, J_3^{[n]})$ of the generators: 
\begin{equation}
\{C^{[m]}, H^{[n]}\}=\{C_{{[m]}}, H^{[n]}\}=0 \, .
\label{poisscomm}
\end{equation}
Thus, any Hamiltonian that is endowed with the above coalgebra symmetry (meaning that can be expressed as in \eqref{Hamcoal})  will be automatically endowed with the $2n-3$ conserved quantities coming from the left and right coproducts. Moreover, the left and right Casimirs define two sets composed by $n-1$ functions in involution ($m,m'=1, \dots, n$):
\begin{equation}
\{C^{[m]}, C^{[m']}\}=\{C_{{[m]}}, C_{{[m']}}\}=0 \, .
\label{involution}
\end{equation}
If we add the Hamiltonian to the above sets, they  will be both composed by $n$ functions in involution. 

To summarise, the construction based on coalgebras led us to define two subsets, say $Y_L$ and $Y_R$, composed by $n-1$ left and right constants of motion in involution, i.e.:

\begin{equation}
Y_L=\{C^{[1]}, \dots, C^{[n]}\} \, , \quad Y_R=\{C_{[1]}, \dots, C_{[n]}\} \, ,
\label{eq:ylyr}
\end{equation}

\noindent the total Casimir $C^{[n]}=C_{[n]}$ being a common element of the sets and $C^{[1]}$, $C_{[1]}$ just constants. 
Once arrived at this point, let us observe that the total Casimir can be rewritten in the following way:

\begin{align}
C^{[n]}&=-\frac{1}{4}\biggl(\sum_{1 \leq i< j}^n \biggl(L_{ij}^2+a_i \frac{x_j^2}{x_i^2}+a_j \frac{x_i^2}{x_j^2}\biggl)+\sum_{i=1}^n a_i\biggl) \nonumber \\
&=-\frac{1}{4}\biggl(\sum_{1 \leq i< j}^n L_{ij}^2+\boldsymbol{x}^2 \sum_{j=1}^n \frac{a_j}{x_j^2}\biggl) \nonumber \\
&=\sum_{1 \leq i< j}^n C_{ij}-(n-2)\sum_{i=1}^nC_{i} \, ,
\label{eq:totcas}
\end{align}

\noindent where we defined the quantities:
\begin{equation}
C_{ij}:=-\frac{1}{4}\biggl( L_{ij}^2+a_i \frac{x_j^2}{x_i^2}+a_j \frac{x_i^2}{x_j^2} +a_i+a_j\biggl) \qquad C_{i}:= -\frac{a_i}{4} \, .
\label{eq:racahgen}
\end{equation}

\noindent Also,  the left and right Casimirs can be re-expressed as:
\begin{equation}
C^{[m]}=\sum_{1 \leq i< j}^m C_{ij}-(m-2)\sum_{i=1}^m C_i \, ,\qquad C_{[m]}=\sum_{n-m+1 \leq i< j}^n C_{ij}-(m-2)\sum_{i=n-m+1}^n C_i \, .
\label{lrRac}
\end{equation}
\noindent These new expressions for the left and right partial Casimirs emphasize, in a more general perspective, their connection with the (abstract) generators of the rank $n-2$ generalised Racah algebra $R(n)$ \cite{bie2020racah, Gaboriaud_2019}, the latter generated by the two indices generators $C_{ij}$ together with the one index generators $C_i$, here appearing as the building blocks of the left and right Casimirs in the given realisation of the Poisson-Lie algebra. Notice that the \textquotedblleft central elements\textquotedblright $C_i$ are realised here just as the constants appearing in \eqref{eq:racahgen}, which underline the one-particle symplectic realisation used on the generic $i$-th site of the chain $\mathfrak{A} \otimes \mathfrak{A} \otimes  \dots ^{n)}\otimes \mathfrak{A}$.
\noindent At this point, we can go further by introducing the quantities:
\begin{equation}
P_{ij}:=C_{ij}-C_i-C_j \, ,
\label{eq:pij}
\end{equation}
\noindent in such a way to simplify the above expressions to the following ones:
\begin{equation}
C^{[m]}=\sum_{1 \leq i< j}^m P_{ij}+\sum_{i=1}^m C_i \, , \qquad C_{[m]}=\sum_{n-m+1 \leq i< j}^n P_{ij}+\sum_{i=n-m+1}^n C_i \, .
\label{newcas}
\end{equation}
The defining relations from $P_{ij}$ are those of the classical analog of the Racah algebra $R(n)$ \cite{Kuru2020, latini2020embedding}:
\begin{align}
&\{P_{ij},P_{jk}\}=:2 F_{ijk} \hskip 5.2cm (F_{ijk}=-{F_{jik}}=-F_{ikj}) \label{eq:PoissonRacah1}\\
&\{P_{jk}, F_{ijk}\}=P_{ik}P_{jk}-P_{jk}P_{ij}+2P_{ik}C_j-2P_{ij}C_k\\
&\{P_{kl}, F_{ijk}\}=P_{ik}P_{jl}-P_{il}P_{jk}\\
&\{F_{ijk}, F_{jkl}\}=F_{jkl}P_{ij}-F_{ikl}(P_{jk}+2C_j)-F_{ijk}P_{jl}\\
&\{F_{ijk}, F_{klm}\}=F_{ilm}P_{jk}-P_{ik}F_{jlm} \, .
\label{eq:PoissonRacah5}
\end{align}

\noindent Finally, if we introduce the new definitions $\mathcal{L}_k := C^{[k]}$ and $\mathcal{R}_k := C_{[n-k+1]}$, for $k=1, \dots, n$, we get:

\begin{equation}
\mathcal{L}_k :=\sum_{1 \leq i< j}^k P_{ij}+\sum_{i=1}^k C_i  \, ,\qquad \mathcal{R}_k := \sum_{k \leq i< j}^n P_{ij}+\sum_{i=k}^n C_i   \, .
\label{lrpij}
\end{equation}
\noindent We thus have, at a fixed $n$ for $k=1, \dots, n$, the following identifications:

\vskip 0.5cm

\begin{enumerate}
\item[\label=$-$]  $(n;\{k\})=(1;\{1\})$
\begin{equation}
\mathcal{L}_1=C^{[1]}=C_1=C_{[1]}= \mathcal{R}_1 \, .
\label{id}
\end{equation}

\item[\label=$-$]  $(n;\{k\})=(2;\{1,2\})$
\begin{equation}
\mathcal{L}_1=C^{[1]} \, , \quad  \mathcal{L}_2=C^{[2]}=C_{[2]}=\mathcal{R}_1 \, ,\quad \mathcal{R}_2 = C_{[1]} \, ,
\label{id2}
\end{equation}
with:
\begin{align*}
&\mathcal{L}_1=C_1\\
& \mathcal{L}_2=P_{12}+C_1+C_2=C_{12}\\
&\mathcal{R}_1=P_{12}+C_1+C_2=C_{12}\\
&\mathcal{R}_2 = C_2 \, .
\end{align*}
\item[\label=$-$]  $(n;\{k\})=(3;\{1,2,3\})$
\begin{equation}
\mathcal{L}_1=C^{[1]} \, ,\quad  \mathcal{L}_2=C^{[2]} \, , \quad  \mathcal{L}_3=C^{[3]}=C_{[3]}=\mathcal{R}_1\, ,\quad \mathcal{R}_2 = C_{[2]} \, , \quad \mathcal{R}_3 = C_{[1]} \, ,
\label{id3}
\end{equation}
with:
\begin{align*}
&\mathcal{L}_1=C_1 \\
& \mathcal{L}_2=P_{12}+C_1+C_2=C_{12} \\
& \mathcal{L}_3=P_{12}+P_{13}+P_{23}+C_1+C_2+C_3=C_{12}+C_{13}+C_{23}-C_1-C_2-C_3=C_{123}\\
&\mathcal{R}_1=P_{12}+P_{13}+P_{23}+C_1+C_2+C_3=C_{12}+C_{13}+C_{23}-C_1-C_2-C_3=C_{123}\\
&\mathcal{R}_2 = P_{23}+C_2+C_3=C_{23} \\
& \mathcal{R}_3 = C_3  \, .
\end{align*}
\vskip 0.25cm
\hskip 0.75cm \vdots
\vskip 0.5cm
\item[\label=$-$] $(n;\{k\})=(n;\{1, \dots, n\})$
\begin{align}
&\mathcal{L}_1=C^{[1]} \, ,\quad \mathcal{L}_2=C^{[2]} \, ,\quad \,\quad \mathcal{L}_3=C^{[3]}  \, , \qquad \dots \quad , \quad   \mathcal{L}_n=C^{[n]}\\
&\mathcal{R}_1=C_{[n]} \, , \quad  \mathcal{R}_2 = C_{[n-1]} \, , \quad \mathcal{R}_3 = C_{[n-2]} \, , \quad \dots \quad \, ,\quad   \mathcal{R}_n = C_{[1]}   \, ,
\label{idn}
\end{align}{}
with:
\begin{align*}
&\mathcal{L}_1=C_1 \\
& \mathcal{L}_2=P_{12}+C_1+C_2=C_{12} \\
& \mathcal{L}_3=P_{12}+P_{13}+P_{23}+C_1+C_2+C_3=C_{12}+C_{13}+C_{23}-C_1-C_2-C_3=C_{123}\\
& \vdots \\
&\mathcal{L}_{n-1}=\sum_{1 \leq i< j}^{n-1} P_{ij}+\sum_{i=1}^{n-1} C_i=\sum_{1 \leq i< j}^{n-1} C_{ij}-(n-3)\sum_{i=1}^{n-1}C_{i}=C_{1\dots n-1}\\
&\mathcal{L}_n=\mathcal{R}_1=\sum_{1 \leq i< j}^n P_{ij}+\sum_{i=1}^n C_i=\sum_{1 \leq i< j}^n C_{ij}-(n-2)\sum_{i=1}^nC_{i}=C_{1\dots n}\\
&\mathcal{R}_2 = \sum_{2 \leq i< j}^n P_{ij}+\sum_{i=2}^n C_i  =\sum_{2 \leq i< j}^n C_{ij}-(n-3)\sum_{i=2}^nC_{i}=C_{2\dots n}\\
& \mathcal{R}_3 = \sum_{3 \leq i< j}^n P_{ij}+\sum_{i=3}^n C_i  =\sum_{3 \leq i< j}^n C_{ij}-(n-4)\sum_{i=3}^nC_{i}=C_{3 \dots n} \\
& \vdots \\
& \mathcal{R}_{n-1} = P_{n-1\,n}+C_{n-1}+C_n=C_{n-1 \,n}  \\
&\mathcal{R}_n = C_n \, .
\end{align*}
\end{enumerate}


\noindent The left and right Casimirs \eqref{lrpij} define the generators of the universal quadratic substructures appearing in \cite{Latini_2019, 10.1088/1751-8121/aac111}. 

Our main goal now   is to show that the above-mentioned substructures can be reinterpreted, at a fixed $n > 3$, as the images of $n-2$ injective morphisms of $R(3)$ into $R(n)$ and that, as a direct consequence, they turn out to be isomorphic to the rank one Racah algebra $R(3)$.
As a first step we rephrase this in the classical framework and show that for $n=3$ the only substructure that arises from the left and right Casimir invariants is a Poisson analog of the Racah algebra $R(3)$.

\subsection{The $\boldsymbol{n=3}$ case: Poisson analog of the rank one Racah algebra $\boldsymbol{R(3)}$}
\label{sub3.1}
\noindent The generators of the Poisson analog of the rank one Racah algebra are: $\{C_1, C_2, C_3, C_{12}, C_{13}, C_{23}, C_{123} \}$ with:
\begin{equation}
C_{123}=C_{12}+C_{13}+C_{23}-C_1-C_2-C_3  \, ,
\label{lineareq}
\end{equation}
where $C_1$, $C_2$, $C_3$ and $C_{123}$ are central elements.
\noindent Now, by defining:
\begin{equation}
\mathcal{F}:=\frac{1}{2}\{C_{12},C_{23}\}=\frac{1}{2}\{C_{23},C_{13}\}=\frac{1}{2}\{C_{13},C_{12}\}
\label{Fijk}
\end{equation}

\noindent we got the following defining Poisson brackets:
\begin{align}
&\{C_{12},\mathcal{F}\}=C_{23}C_{12}-C_{12}C_{13}+(C_2-C_1)(C_3-C_{123}) \label{1rac}\\
&\{C_{23},\mathcal{F}\}=C_{13}C_{23}-C_{23}C_{12}+(C_3-C_2)(C_1-C_{123})  \label{2rac}\\
&\{C_{13},\mathcal{F}\}=C_{12}C_{13}-C_{13}C_{23}+(C_1-C_3)(C_2-C_{123}) \, .  \label{3rac}
\end{align}
From \eqref{lineareq}, using the properties of the Poisson brackets, we obtain the relations:
\begin{equation}
\{C_{23}, C_{12}+C_{13}\}=0 \qquad \{C_{12}, C_{13}+C_{23}\}=0 \qquad \{C_{13}, C_{12}+C_{23}\}=0 \, .
\label{eq:DK}
\end{equation}
Also, the following sum is zero:
\begin{equation}
\{C_{12},\mathcal{F}\}+\{C_{23},\mathcal{F}\}+\{C_{13},\mathcal{F}\}=0 \, .
\label{sum}
\end{equation}
\noindent Now, if we use \eqref{lineareq} to explicitate the $C_{13}$ generator, i.e.:
\begin{equation}
C_{13}=C_{123}-C_{12}-C_{23}+C_1+C_2+C_3 \,
\label{c13}
\end{equation}
and we substitute this expressions into  \eqref{1rac}-\eqref{2rac}, taking into account \eqref{sum}, we get:
\begin{align}
&\{C_{12},\mathcal{F}\}=+C_{12}^2+2C_{23}C_{12}-(C_{123}+C_1+C_2+C_3)C_{12}+(C_2-C_1)(C_3-C_{123}) \label{eq:sub321}\\
&\{C_{23},\mathcal{F}\}=-C_{23}^2-2C_{12}C_{23}+(C_{123}+C_1+C_2+C_3)C_{23}+(C_3-C_2)(C_1-C_{123}) \label{eq:sub322}\, .
\end{align}

\noindent We recognize this quadratic Poisson algebra to be nothing but the substructure $(n,{k})=(3,2)$, which we indicate as\footnote{From now on we use the notation $S_{(n,k)}$ to indicate the generic $k$-th substructure at fixed $n$.} $S_{(3,2)}$ as reported in \cite{Latini_2019}.  Notice that in this presentation the operator $C_{13}$, being a linear combination of the others, does not appear explicitly. Now, if we consider the relations among the generators of the Racah algebra and the left and right Casimirs in dimension $n=3$, as we reported previously, this connection appears more evident. In fact, if substitute the expressions we get:
\begin{align}
&\{\mathcal{L}_2,\mathcal{F}\}=+\mathcal{L}_2^2+2\mathcal{R}_2\mathcal{L}_2-(\mathcal{L}_3+\mathcal{L}_1+C_2+\mathcal{R}_3)\mathcal{L}_2+(C_2-\mathcal{L}_1)(\mathcal{R}_3-\mathcal{L}_3) \label{s32c1}\\
&\{\mathcal{R}_2,\mathcal{F}\}=-\mathcal{R}_2^2-2\mathcal{L}_2\mathcal{R}_2+(\mathcal{L}_3+\mathcal{L}_1+C_2+\mathcal{R}_3)\mathcal{R}_2+(\mathcal{R}_3-C_2)(\mathcal{L}_1-\mathcal{L}_3)  \label{s32c2} \, ,
\end{align}
with:
\begin{equation}
\mathcal{F}:=\frac{1}{2}\{\mathcal{L}_2,\mathcal{R}_2\}=\frac{1}{2}\{\mathcal{R}_2,\mathcal{M}_2\}=\frac{1}{2}\{\mathcal{M}_2,\mathcal{L}_2\}\, ,
\label{FLRijk}
\end{equation}
where we introduced the new definition $\mathcal{M}_2:=\mathcal{L}_3-\mathcal{L}_2-\mathcal{R}_2+\mathcal{L}_1+C_2+\mathcal{R}_3$ for the generator $C_{13}$, the latter will be useful later in the generalization to higher dimensions.  Thus, on the basis of these results, we can provide an equivalent presentation for the substructure $S_{(3,2)}$, which reads:
\begin{align}
&\{\mathcal{L}_2,\mathcal{F}\}\,=\mathcal{R}_2\mathcal{L}_2-\mathcal{L}_2\mathcal{M}_2+(C_2-\mathcal{L}_1)(\mathcal{R}_3-\mathcal{L}_3) \label{1sb}\\
&\{\mathcal{R}_2,\mathcal{F}\}\,=\mathcal{M}_2\mathcal{R}_2-\mathcal{R}_2\mathcal{L}_2+(\mathcal{R}_3-C_2)(\mathcal{L}_1-\mathcal{L}_3)  \label{2sb}\\
&\{\mathcal{M}_2,\mathcal{F}\}=\mathcal{L}_2\mathcal{M}_2-\mathcal{M}_2\mathcal{R}_2+(\mathcal{L}_1-\mathcal{R}_3)(C_2-\mathcal{L}_3) \, .  \label{3sb}
\end{align}
In conclusion, we observed that the substructure arising in dimension $n=3$ in \cite{Latini_2019}  represents a classical realisation (given in terms of Poisson brackets) of the rank one Racah algebra $R(3)$, the latter being generated by the set $\{\mathcal{L}_1, C_2, \mathcal{R}_3, \mathcal{L}_{2}, \mathcal{M}_{2}, \mathcal{R}_{2}, \mathcal{L}_{3}=\mathcal{R}_1 \}$, where $C_2=-a_2/4$ in the given realisation and $\mathcal{M}_{2}$ appears as a linear combination of the other generators.

\subsection{Higher dimensional classical substructures, injective morphisms and (Poisson) Racah algebras $\boldsymbol{R^{K_1,K_2,K_3}(3)}$}
\label{sec3.2}
\noindent The main aim of this section is to provide a characterisation of the  substructures $S_{(n,k)}$, at a fixed $n>3$ for $k=2, \dots, n-1$ with the help of the following \cite{De_Bie_2017, bie2020racah, Crampe2020, DeBie}:

\begin{lemma}
	Let $\{K_1, K_2, K_3\}$ be a set composed by three disjoint subsets of the set $[n]:=\{1,\dots, n\}$. Define $K_B:=\cup_{q \in B} K_q$ with $B \subset [3]$. Then, the following map:
\begin{align*}
	\theta: \,&R(3) \to R(n)\\
	& \,C_B \,\,\, \to C_{K_B}
	\end{align*}
	is an injective morphism. Its image is denoted by $R^{K_1,K_2,K_3}(3)$ and it is isomorphic to the rank one Racah algebra.
	\label{Lemma}
\end{lemma}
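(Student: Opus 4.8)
The plan is to prove the Lemma in two stages: first reduce the assertion that $\theta$ is a morphism to the verification that the images $\{C_{K_1},C_{K_2},C_{K_3},C_{K_{12}},C_{K_{13}},C_{K_{23}},C_{K_{123}}\}$ satisfy the defining relations \eqref{lineareq}--\eqref{3rac} of $R(3)$, and then establish injectivity separately. Throughout I would use the closed form of a sub-Casimir, $C_A=\sum_{i<j\in A}C_{ij}-(|A|-2)\sum_{i\in A}C_i$, valid for every $A\subseteq[n]$, which is the natural generalisation of \eqref{lineareq}.

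First I would record two selection rules for sub-Casimirs in $R(n)$: if $A,B\subseteq[n]$ are either disjoint or nested, then $\{C_A,C_B\}=0$. In the symplectic (or differential) realisation these are immediate: disjoint $A,B$ involve disjoint canonical pairs $(x_i,p_i)$ and hence commute, whereas a nested pair $A\subseteq B$ reproduces, up to a relabelling of sites, the involution property \eqref{involution} of the left Casimirs. From these rules the \textquotedblleft central\textquotedblright\ part of the statement is automatic: since $K_1\subseteq K_{12},K_{13},K_{123}$ while $K_1\cap K_{23}=\emptyset$ (and cyclically), each $C_{K_q}$ commutes with every $C_{K_B}$, so that $\theta(C_1),\theta(C_2),\theta(C_3)$ and $\theta(C_{123})$ are central in the image, exactly as $C_1,C_2,C_3,C_{123}$ are central in $R(3)$. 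The linear relation \eqref{lineareq} is preserved as well: substituting the closed form above and bookkeeping the coefficient of each within-block pair $C_{ij}$, each cross-block pair, and each singleton $C_i$ shows $C_{K_{123}}=C_{K_{12}}+C_{K_{13}}+C_{K_{23}}-C_{K_1}-C_{K_2}-C_{K_3}$, i.e. $\theta$ respects \eqref{lineareq}.

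The substantive step is to verify the three quadratic relations \eqref{1rac}--\eqref{3rac} under the block substitution $C_{ij}\mapsto C_{K_{ij}}$, $C_i\mapsto C_{K_i}$, $C_{123}\mapsto C_{K_{123}}$, with structure element $\mathcal{F}_K:=\tfrac12\{C_{K_{12}},C_{K_{23}}\}$. I would establish this by a block-merging argument: collapsing each block $K_q$ to a single effective site, the sub-Casimir $C_{K_q}$ plays the role of a one-index (central) generator and $C_{K_B}$ with $|B|=2$ the role of a two-index generator, and the cross-block angular-momentum content of $R(n)$ closes on these coarse-grained Casimirs just as the full $\mathfrak{so}$ structure closes on three genuine sites. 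Concretely, one computes $\{C_{K_{12}},\mathcal{F}_K\}$ in the realisation and checks it equals $C_{K_{23}}C_{K_{12}}-C_{K_{12}}C_{K_{13}}+(C_{K_2}-C_{K_1})(C_{K_3}-C_{K_{123}})$, the other two following cyclically; consistency with the block analogues of \eqref{eq:DK} and \eqref{sum}, which hold verbatim by the selection rules, gives a useful cross-check. I expect this to be the main obstacle: unlike the centrality and linearity checks, the quadratic relations mix within-block and cross-block terms, and one must confirm that all genuinely $R(n)$-specific contributions reorganise into right-hand sides expressible purely through the seven block Casimirs. Carrying this out is the heart of the computation and is precisely the content imported from \cite{De_Bie_2017, bie2020racah, Crampe2020, DeBie}.

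Finally, for injectivity I would exhibit a realisation of $R(n)$ in which the images $C_{K_{12}},C_{K_{13}},C_{K_{23}}$ are algebraically independent, for instance by keeping the block sizes and the parameters $a_i$ generic so that no polynomial relation beyond those of $R(3)$ can hold among them. Since $R(3)$ admits a Poincar\'e--Birkhoff--Witt--type basis of ordered monomials over its central subalgebra, and $\theta$ sends such a basis to a linearly independent family in this realisation, the kernel of $\theta$ is trivial. Hence the surjection $\theta:R(3)\to R^{K_1,K_2,K_3}(3)$ is an isomorphism onto its image, which is the claim.
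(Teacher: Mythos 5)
Your proposal does not actually close the Lemma, and the gap is exactly where you flag it yourself: after correctly reducing the morphism claim to checking that the block Casimirs $C_{K_B}$ satisfy the defining relations of $R(3)$, you hand off the verification of the quadratic relations \eqref{1rac}--\eqref{3rac} --- which you call ``the heart of the computation'' --- to \cite{De_Bie_2017, bie2020racah, Crampe2020, DeBie}, and your injectivity argument rests on an unproven Poincar\'e--Birkhoff--Witt-type claim for $R(3)$ together with an unproven genericity assertion about the chosen realisation. A plan plus a citation is not a proof. The parts you do carry out are sound: the two selection rules (disjoint and nested sub-Casimirs Poisson commute) follow correctly from the disjointness of canonical pairs and from the involution property \eqref{involution} up to relabelling of sites; they do imply that $\theta(C_1),\theta(C_2),\theta(C_3),\theta(C_{123})$ are central in the image; and the coefficient bookkeeping behind $C_{K_1 K_2 K_3}=C_{K_1K_2}+C_{K_1K_3}+C_{K_2K_3}-C_{K_1}-C_{K_2}-C_{K_3}$ is a genuine and correct computation. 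One further caution: verifying relations inside a particular symplectic or differential realisation of $R(n)$ establishes the morphism property for the abstract algebras only if that realisation is faithful (or if the computation is redone abstractly from \eqref{eq:PoissonRacah1}--\eqref{eq:PoissonRacah5}); the same caveat infects your injectivity argument, which also reasons through a realisation.

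For fairness, you should know that the paper is in precisely the same position: it offers no proof of this Lemma either, stating that it is a special case of more general results in \cite{bie2020racah, DeBie} and referring the reader to \cite{De_Bie_2017} for the proof (and to \cite{Crampe2020} for the $\textsf{k}$-fold tensor product generalisation). So your attempt defers to exactly the sources the paper defers to; the difference is that the paper honestly presents the Lemma as imported background, whereas your text is framed as a proof while leaving the two decisive steps --- the quadratic relations and injectivity --- unestablished. What your write-up adds beyond the paper (the selection rules, the centrality argument, the linear-relation check) is correct and would be a reasonable opening of a genuine proof, but it is the easy part.
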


\noindent This result appears to be a specific case of the more general one appearing in \cite{bie2020racah, DeBie}, where the authors deal with a general number $\textsf{k}$ of disjoint subsets $\{K_p\}_{p=1, \dots, \textsf{k}}$ of $[n]$. We refer to  \cite{De_Bie_2017} for the proof of this Lemma (see also \cite{Crampe2020} for the generalisation to the $\textsf{k}$-fold tensor product). So, the idea is to make use of the above result paraphrasing it in this classical context.
In particular, if we consider the generators of the Racah algebra $R(3)$, i.e. the set $\{C_1, C_2, C_3, C_{12}, C_{13}, C_{23}, C_{123}\}$, then their images under the map $\theta$: 
\begin{equation}
\small \{\theta(C_1),\theta(C_2), \theta(C_3), \theta(C_{12}), \theta(C_{13}), \theta(C_{23}), \theta(C_{123})\} = \{C_{K_1}, C_{K_2}, C_{K_3}, C_{K_1 \cup K_2}, C_{K_1 \cup K_3}, C_{K_2 \cup K_3}, C_{K_1 \cup K_2 \cup K_3} \} \, ,
\label{map}
\end{equation}
define a new set composed by seven generators that closes an algebra isomorphic to the rank one Racah algebra $R(3)$. In our classical setting, in terms of Poisson brackets, the algebra $R^{K_1,K_2,K_3}(3)$ is defined as:
\begin{align}
&\{C_{K_1 K_2},\mathcal{F}\}=C_{K_2K_3}C_{K_1 K_2}-C_{K_1 K_2}C_{K_1 K_3}+(C_{K_2}-C_{K_1})(C_{K_3}-C_{K_1 K_2 K_3}) \label{1rac_theta}\\
&\{C_{K_2 K_3},\mathcal{F}\}=C_{K_1K_3}C_{K_2K_3}-C_{K_2K_3}C_{K_1K_2}+(C_{K_3}-C_{K_2})(C_{K_1}-C_{K_1 K_2 K_3})  \label{2rac_theta}\\
&\{C_{K_1 K_3},\mathcal{F}\}=C_{K_1 K_2}C_{K_1 K_3}-C_{K_1 K_3}C_{K_2 K_3}+(C_{K_1}-C_{K_3})(C_{K_2}-C_{K_1 K_2 K_3}) \, , \label{3rac_theta}
\end{align}
with:
\begin{equation}
\mathcal{F}:=\frac{1}{2}\{C_{K_1 K_2},C_{K_2 K_3}\}=\frac{1}{2}\{C_{K_2 K_3},C_{K_1 K_3}\}=\frac{1}{2}\{C_{K_1 K_3},C_{K_1 K_2}\} \, .
\label{Fijk_theta}
\end{equation}
Here, we used the shorthand notation $C_{K_i K_j}:=C_{K_i \cup K_j}$ for $1 \leq i <j \leq 3$ and $C_{K_1 K_2 K_3}:= C_{K_1 \cup K_2 \cup K_3}$. Notice that the linear relation \eqref{lineareq} is rephrased here as:
\begin{equation}
C_{K_1 K_2 K_3}=C_{K_1 K_2}+ C_{K_1 K_3}+ C_{K_2 K_3}-C_{K_1}-C_{K_2}-C_{K_3} \, .
\label{eq}
\end{equation}
What we are interested to show in the following is that the substructures $S_{(n,k)}$ studied in \cite{Latini_2019, 10.1088/1751-8121/aac111}, at a fixed $n > 3$, can be interpreted as a realisation of the algebras arising as the images of $n-2$ injective morphisms: 
$$\theta_k: R(3) \to R(n) \quad  (k=2, \dots n-1) \, ,$$
for specific choices of the three disjoint subset $\{K_i\}_{i=1,2,3}$ as defined above. Thus, each substructure turns out to be isomorphic to a classical realisation of the rank one Racah algebra $R(3)$. Let us present this result as the following:

\begin{prop}
	\label{prop1}
	 Let $[n]:=\{1, \dots, n\}$ be a set with $n>3$ fixed. For each $k=2, \dots, n-1$ define $K_1^{(k)}:=\{1,\dots, k-1\}$, $K_2^{(k):}=\{k\}$ and $K_3^{(k)}:=\{k+1, \dots, n\}$ to be three disjoint subsets of the set $[n]$. Then, the images of the $n-2$ injective morphisms $\theta_k:$  $R(3) \to R(n)$ $(k=2, \dots, n-1)$:
\begin{align*}
&\theta_k(C_1)=C_{K_1^{(k)}} \equiv C_{1 \dots k-1}=\mathcal{L}_{k-1}\\
&\theta_k(C_2)=C_{K_2^{(k)}} \equiv C_k \\
&\theta_k(C_3)=C_{K_3^{(k)}} \equiv C_{k+1 \dots n}=\mathcal{R}_{k+1}\\
&\theta_k(C_{12})=C_{K_1^{(k)} K_2^{(k)}} \equiv C_{1 \dots k}=\mathcal{L}_k\\
&\theta_k(C_{23})=C_{K_2^{(k)} K_3^{(k)}} \equiv C_{k \dots n}=\mathcal{R}_k\\
&\theta_k(C_{13})=C_{K_1^{(k)} K_3^{(k)}} \equiv C_{1 \dots k-1 k+1 \dots n}=\mathcal{M}_k\\
&\theta_k(C_{123})=C_{K_1^{(k)} K_2^{(k)} K_3^{(k)}} \equiv C_{1\dots n}=\mathcal{L}_n=\mathcal{R}_1
\end{align*}
\noindent with: $$\mathcal{M}_{k}:=\mathcal{L}_n-\mathcal{L}_k-\mathcal{R}_k+\mathcal{L}_{k-1}+C_k+\mathcal{R}_{k+1} \, ,$$
result in the universal classical substructures $S_{(n,k)}$. Moreover, as a direct consequence of Lemma \ref{Lemma} they are isomorphic to the Poisson analog of the Racah algebra $R(3)$.  
\end{prop}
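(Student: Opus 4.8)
The proposition makes two assertions: first, that the image of each $\theta_k$ is precisely the universal substructure $S_{(n,k)}$, and second, that this image is isomorphic to the Poisson analog of $R(3)$. The second assertion is a direct corollary of Lemma~\ref{Lemma} once the first is in place, so the plan is to (i) check that Lemma~\ref{Lemma} applies to the proposed triples of subsets, (ii) verify the seven explicit identifications of the images $\theta_k(C_B)$ with the substructure generators, and (iii) substitute these identifications into the abstract relations \eqref{1rac_theta}--\eqref{3rac_theta} and match them against the known presentation of $S_{(n,k)}$.

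First I would confirm that Lemma~\ref{Lemma} is applicable. For each fixed $k \in \{2,\dots,n-1\}$ the three sets $K_1^{(k)}=\{1,\dots,k-1\}$, $K_2^{(k)}=\{k\}$, $K_3^{(k)}=\{k+1,\dots,n\}$ are nonempty and pairwise disjoint subsets of $[n]$ (in fact they partition $[n]$), so the hypotheses of the Lemma are met and $\theta_k$ is an injective morphism whose image $R^{K_1^{(k)},K_2^{(k)},K_3^{(k)}}(3)$ is isomorphic to the rank one Racah algebra.

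Next I would establish the identifications. Writing each subset generator in the chosen realisation as $C_K=\sum_{i<j,\, i,j\in K} P_{ij}+\sum_{i\in K} C_i$ (the pattern already exhibited for $\mathcal{L}_k=C_{1\dots k}$ and $\mathcal{R}_k=C_{k\dots n}$ in \eqref{lrpij}), the six assignments $\theta_k(C_1)=\mathcal{L}_{k-1}$, $\theta_k(C_2)=C_k$, $\theta_k(C_3)=\mathcal{R}_{k+1}$, $\theta_k(C_{12})=\mathcal{L}_k$, $\theta_k(C_{23})=\mathcal{R}_k$ and $\theta_k(C_{123})=\mathcal{L}_n=\mathcal{R}_1$ follow immediately by comparing index ranges. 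The only nontrivial identification is $\theta_k(C_{13})=C_{K_1^{(k)}\cup K_3^{(k)}}=C_{[n]\setminus\{k\}}=\mathcal{M}_k$. Here I would expand the stated definition $\mathcal{M}_k=\mathcal{L}_n-\mathcal{L}_k-\mathcal{R}_k+\mathcal{L}_{k-1}+C_k+\mathcal{R}_{k+1}$ using \eqref{lrpij} and \eqref{eq:pij}, then classify the pairs $(i,j)$ and the single indices according to their position relative to the site $k$. A short coefficient count shows that every pair and index involving $k$ cancels, while all pairs and indices avoiding $k$ survive with coefficient one, leaving exactly $\sum_{i<j,\, i,j\neq k}P_{ij}+\sum_{i\neq k}C_i=C_{[n]\setminus\{k\}}$, as required.

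With all seven images identified, the image of $\theta_k$ is the quadratic Poisson algebra generated by $\{\mathcal{L}_{k-1},C_k,\mathcal{R}_{k+1},\mathcal{L}_k,\mathcal{R}_k,\mathcal{M}_k,\mathcal{L}_n\}$, which by Lemma~\ref{Lemma} obeys the relations \eqref{1rac_theta}--\eqref{3rac_theta} and is therefore isomorphic to the Poisson analog of $R(3)$. To complete the statement that this image \emph{results in} $S_{(n,k)}$, I would substitute the identifications into those relations and use the linear relation \eqref{eq} to eliminate $\mathcal{M}_k$, reproducing the defining brackets of $S_{(n,k)}$ in exactly the form found for $S_{(3,2)}$ in \eqref{1sb}--\eqref{3sb} under the index shift $(\mathcal{L}_1,C_2,\mathcal{R}_3,\mathcal{L}_2,\mathcal{R}_2,\mathcal{M}_2,\mathcal{L}_3)\mapsto(\mathcal{L}_{k-1},C_k,\mathcal{R}_{k+1},\mathcal{L}_k,\mathcal{R}_k,\mathcal{M}_k,\mathcal{L}_n)$. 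I expect the main obstacle to be the combinatorial bookkeeping in the $\mathcal{M}_k$ identity, together with verifying that the resulting brackets coincide verbatim with the presentation of $S_{(n,k)}$ in \cite{Latini_2019}; both are mechanical but demand care with the index ranges.
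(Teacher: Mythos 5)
Your proposal is correct and follows essentially the same route as the paper's own proof: invoke Lemma \ref{Lemma} for the $n-2$ triples of disjoint subsets, identify the seven images with $\{\mathcal{L}_{k-1}, C_k, \mathcal{R}_{k+1}, \mathcal{L}_k, \mathcal{R}_k, \mathcal{M}_k, \mathcal{L}_n\}$, and recover $S_{(n,k)}$ by eliminating $\mathcal{M}_k$ through the lifted linear relation and substituting into the Racah brackets, exactly as in \eqref{rac1}--\eqref{rac3} and \eqref{csubstructure1}--\eqref{csubstructure2}. The only (harmless) variation is that you justify $\theta_k(C_{13})=\mathcal{M}_k$ by direct coefficient counting in the $P_{ij}$, $C_i$ realisation, whereas the paper obtains it from the lift of the linear relation \eqref{lineareq}; the two arguments are equivalent.
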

\begin{proof}
At a fixed $n>3$, for each $k=2, \dots, n-1$, we can define the algebras $R^{K_1^{(k)},K_2^{(k)},K_3^{(k)}}(3)$ as the images of the $n-2$ morphisms defined above and lift the relations valid for $R(3)$ into $R(n)$ (we can think the procedure as $n-2$ applications of Lemma \ref{Lemma}). This implies that each of the $n-2$  subalgebras are generated by the elements:
\begin{equation}
\{C_{K_1^{(k)}},C_{K_2^{(k)}},C_{K_3^{(k)}}, C_{K_1^{(k)} K_2^{(k)}}, C_{K_1^{(k)} K_3^{(k)}}, C_{K_1^{(k)} K_2^{(k)} K_3^{(k)}} \} 
\label{eq:elements}
\end{equation}
$C_{K_1^{(k)}},C_{K_2^{(k)}},C_{K_3^{(k)}}$ and $C_{K_1^{(k)} K_2^{(k)} K_3^{(k)}}$ playing the role of central elements for each of the $n-2$ substructures. The linear relation \eqref{lineareq} is lifted to the following $n-2$ linear relations (one relation for each $k=2, \dots,n-1$):
\begin{equation}
C_{K_1^{(k)} K_2^{(k)} K_3^{(k)}}=C_{K_1^{(k)} K_2^{(k)}}+ C_{K_1^{(k)} K_3^{(k)}}+ C_{K_2^{(k)} K_3^{(k)}}-C_{K_1^{(k)}}-C_{K_2^{(k)}}-C_{K_3^{(k)}} \, .
\label{eqkk}
\end{equation}
So, we have:
\begin{equation}
\mathcal{F}_k:=\frac{1}{2}\{C_{K_1^{(k)} K_2^{(k)} },C_{K_2^{(k)}  K_3^{(k)} }\}=\frac{1}{2}\{C_{K_2^{(k)}  K_3^{(k)} },C_{K_1^{(k)}  K_3^{(k)} }\}=\frac{1}{2}\{C_{K_1^{(k)}  K_3^{(k)} },C_{K_1^{(k)}  K_2^{(k)} }\} \, ,
\label{Fijk_thetak}
\end{equation}

\noindent together with:
\begin{align}
&\{C_{K_1^{(k)} K_2^{(k)}},\mathcal{F}_k\}=C_{K_2^{(k)}K_3^{(k)}}C_{K_1^{(k)} K_2^{(k)}}-C_{K_1^{(k)} K_2^{(k)}}C_{K_1^{(k)} K_3^{(k)}}+(C_{K_2^{(k)}}-C_{K_1^{(k)}})(C_{K_3^{(k)}}-C_{K_1^{(k)} K_2^{(k)} K_3^{(k)}}) \label{1rac_thetak}\\
&\{C_{K_2^{(k)} K_3^{(k)}},\mathcal{F}_k\}=C_{K_1^{(k)}K_3^{(k)}}C_{K_2^{(k)}K_3^{(k)}}-C_{K_2^{(k)}K_3^{(k)}}C_{K_1^{(k)}K_2^{(k)}}+(C_{K_3^{(k)}}-C_{K_2^{(k)}})(C_{K_1^{(k)}}-C_{K_1^{(k)} K_2^{(k)} K_3^{(k)}})  \label{2rac_thetak}\\
&\{C_{K_1^{(k)} K_3^{(k)}},\mathcal{F}_k\}=C_{K_1^{(k)} K_2^{(k)}}C_{K_1^{(k)} K_3^{(k)}}-C_{K_1^{(k)} K_3^{(k)}}C_{K_2^{(k)} K_3^{(k)}}+(C_{K_1^{(k)}}-C_{K_3^{(k)}})(C_{K_2^{(k)}}-C_{K_1^{(k)} K_2^{(k)} K_3^{(k)}}) \, . \label{3rac_thetak}
\end{align}
\noindent The following relations, the equivalent of \eqref{eq:DK}, also holds:
\begin{align}
&\{C_{K_2^{(k)}K_3^{(k)}},  C_{K_1^{(k)}K_2^{(k)}}+C_{K_1^{(k)}K_3^{(k)}}\}=0\\  &\{C_{K_1^{(k)}K_2^{(k)}}, C_{K_1^{(k)}K_3^{(k)}}+C_{K_2^{(k)}K_3^{(k)}}\}=0 \\ &\{C_{K_1^{(k)}K_3^{(k)}}, C_{K_1^{(k)}K_2^{(k)}}+C_{K_2^{(k)}K_3^{(k)}}\}=0 \, ,
\label{eq:DKk}
\end{align}
and the following sum is still zero when we lift it into $R(n)$:
\begin{equation}
\{C_{K_1^{(k)}K_2^{(k)}},\mathcal{F}_k\}+\{C_{K_2^{(k)}K_3^{(k)}},\mathcal{F}_k\}+\{C_{K_1^{(k)}K_3^{(k)}},\mathcal{F}_k\}=0 \, .
\label{sumk}
\end{equation}
What we need to show is the equivalence with the substructures $S_{(n,k)}$. To do this, we can proceed as we have done for the $n=3$ case in Subsection \ref{sub3.1}, as we are dealing with $n-2$ algebras isomorphic to $R(3)$.

For each $k=2, \dots, n-1$, let $K_1^{(k)}:=\{1,\dots, k-1\}$, $K_2^{(k):}=\{k\}$ and $K_3^{(k)}:=\{k+1, \dots, n\}$ be three specific disjoint subsets of $[n]$. With this choice, the generators of $R^{K_1^{(k)}, K_2^{(k)}, K_3^{(k)}}(3)$  are:
\begin{equation}
 \{\mathcal{L}_{k-1}, C_k, \mathcal{R}_{k+1}, \mathcal{L}_k, \mathcal{R}_k, \mathcal{M}_k, \mathcal{L}_n=\mathcal{R}_1\},
 \label{eq:gene}
 \end{equation}
 $\mathcal{L}_{k-1}$, $C_k$, $\mathcal{R}_{k+1}$ and $\mathcal{L}_n=\mathcal{R}_1$ being central elements associated to the $k$-th substructure. The linear relation \eqref{lineareq} is lifted to the $n-2$ linear relations:
\begin{equation}
\mathcal{L}_n = \mathcal{L}_{k}+\mathcal{R}_{k}+\mathcal{M}_{k}-\mathcal{L}_{k-1}-C_k-\mathcal{R}_{k+1}=\mathcal{R}_1 \, .
\label{lineqk}
\end{equation}
and the Poisson algebras $R^{K_1^{(k)},K_2^{(k)},K_3^{(k)}}(3)$ turn out to be:
\begin{align}
&\{\mathcal{L}_k,\mathcal{F}_k\}=\mathcal{R}_k\mathcal{L}_k-\mathcal{L}_k\mathcal{M}_k+(C_k-\mathcal{L}_{k-1})(\mathcal{R}_{k+1}-\mathcal{L}_n) \label{rac1}\\
&\{\mathcal{R}_k,\mathcal{F}_k\}=\mathcal{M}_k\mathcal{R}_k\,-\mathcal{R}_k\mathcal{L}_k\,+(\mathcal{R}_{k+1}-C_k)(\mathcal{L}_{k-1}-\mathcal{L}_n) \label{rac2}\\
&\{\mathcal{M}_k,\mathcal{F}_k\}=\mathcal{L}_k\mathcal{M}_k-\mathcal{M}_k\mathcal{R}_k+(\mathcal{L}_{k-1}-\mathcal{R}_{k+1})(C_k-\mathcal{L}_{n}) \label{rac3}\, ,
\end{align}
where:
\begin{equation}
\mathcal{F}_k:=\frac{1}{2}\{\mathcal{L}_k,\mathcal{R}_k\}=\frac{1}{2}\{\mathcal{R}_k,\mathcal{M}_{k}\}=\frac{1}{2}\{\mathcal{M}_{k},\mathcal{L}_k\} \, .
\label{funk}
\end{equation}
The relations given in \eqref{eq:DK}, are lifted to the following ones:
\begin{equation}
\{\mathcal{R}_k, \mathcal{L}_{k}+\mathcal{M}_{k}\}=0 \, ,\qquad \{\mathcal{L}_k, \mathcal{M}_{k}+\mathcal{R}_{k}\}=0  \, ,\qquad \{\mathcal{M}_{k}, \mathcal{L}_{k}+\mathcal{R}_{k}\}=0 \, .
\label{DKk}
\end{equation}
Moreover, relation \eqref{sum}, is now lifted to:
\begin{equation}
\{\mathcal{L}_k,\mathcal{F}_k\}+\{\mathcal{R}_k,\mathcal{F}_k\}+\{\mathcal{M}_k,\mathcal{F}_k\}=0 \,.
\label{sumkk}
\end{equation}
 In total analogy to the three dimensional case, we use \eqref{lineqk} to explicitate the $\mathcal{M}_k$ generators, i.e.:
\begin{equation}
\mathcal{M}_{k}=\mathcal{L}_n-\mathcal{L}_k-\mathcal{R}_k+\mathcal{L}_{k-1}+C_k+\mathcal{R}_{k+1} \, ,
\label{Mk}
\end{equation}
and we substitute these expressions into  \eqref{rac1}-\eqref{rac2}, taking into account \eqref{sumkk}. In this way, we get:
\begin{align}
&\{\mathcal{L}_k,\mathcal{F}_k\}=+\mathcal{L}_k^2+2\mathcal{R}_k \mathcal{L}_k-(\mathcal{L}_n+\mathcal{L}_{k-1}+C_k+\mathcal{R}_{k+1})\mathcal{L}_k+(C_k-\mathcal{L}_{k-1})(\mathcal{R}_{k+1}-\mathcal{L}_n) \label{csubstructure1}\\
&\{\mathcal{R}_k,\mathcal{F}_k\}=-\mathcal{R}_k^2-2\mathcal{L}_k \mathcal{R}_k+(\mathcal{L}_n+\mathcal{L}_{k-1}+C_k+\mathcal{R}_{k+1})\mathcal{R}_k+(\mathcal{R}_{k+1}-C_{k})(\mathcal{L}_{k-1}-\mathcal{L}_n) \label{csubstructure2} \, ,
\end{align}

\noindent where $\mathcal{L}_n=C^{[n]}=C_{[n]}=\mathcal{R}_1$. At a fixed $n$, for each $k=2, \dots, n-1$ these Poisson commutation relations define the universal quadratic substructures $S_{(n,k)}$ generated by the classical left and right Casimir invariants that can be found in \cite{Latini_2019}. \newline
\end{proof}
\begin{rem}
If we let $n$ to be equal three in the construction, so that $k=2$, we obtain the three disjoint subsets $K_1^{(2)}=\{1\}$, $K_2^{(2)}=\{2\}$ and $K_3^{(2)} =\{3\}$. In this case, the morphism $\theta_2 : R(3) \to R(3)$ acts as follows:
\begin{align*}
	&\theta_2(C_1)=C_{K_1^{(2)}} \equiv C_1=\mathcal{L}_1\\
	&\theta_2(C_2) = C_{K_2^{(2)}}\equiv C_2 \\
	&\theta_2(C_3) = C_{K_3^{(2)}} \equiv C_3=\mathcal{R}_3\\
	&\theta_2(C_{12})=C_{K_1^{(2)} K_2^{(2)}} \equiv C_{12}=\mathcal{L}_2\\
	&\theta_2(C_{23})=C_{K_2^{(2)} K_3^{(2)}} \equiv C_{23}=\mathcal{R}_2\\
	&\theta_2(C_{13})=C_{K_1^{(2)} K_3^{(2)}} \equiv C_{13}=\mathcal{M}_2\\
	&\theta_2(C_{123})=C_{K_1^{(2)} K_2^{(2)}K_3^{(2)}} \equiv C_{123}=\mathcal{L}_3=\mathcal{R}_1 \, ,
\end{align*}
\noindent with: $$\mathcal{M}_{2}:=\mathcal{L}_3-\mathcal{L}_2-\mathcal{R}_2+\mathcal{L}_{1}+C_2+\mathcal{R}_{3} \, .$$
and the construction collapses to the one we performed in Subsection \ref{sub3.1} for $S_{(3,2)} = R^{1,2,3}(3)$.
\label{remark1}
\end{rem}
	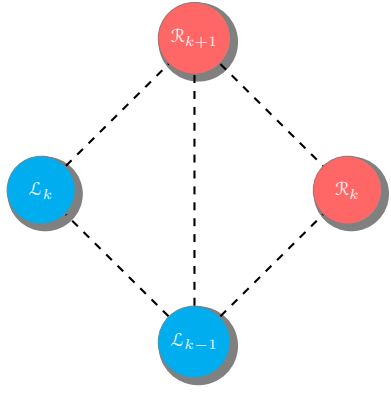
\begin{figure}[httb]
		\begin{center}
\begin{tikzpicture}[node distance=1.35cm and 1.35cm]
\node[state,style={minimum size=1cm, scale=0.9, circular drop shadow, fill=cyan,draw=gray,text=white}] (Lk)                             {\footnotesize{$\mathcal{L}_{k}$}};
\node[state,style={minimum size=1cm, scale=0.9, circular drop shadow, fill=red!60,draw=gray,text=white}] [above right=of Lk] (Rkp1)     {\footnotesize{$\mathcal{R}_{k+1}$}};
\node[state,style={minimum size=1cm,scale=0.9, circular drop shadow,fill=cyan,draw=gray,text=white}] [below right=of Lk] (Lk-1)      {\footnotesize{$\mathcal{L}_{k-1}$}};
\node[state,style={minimum size=1cm, scale=0.9, circular drop shadow, fill=red!60,draw=gray,text=white}] [below right=of Rkp1] (Rk) {\footnotesize{$\mathcal{R}_k$}};

\path [style={draw=black, dashed,thick}](Lk) edge node {} (Rkp1);
\path [style={draw=red!60, thick, dashed}](Rkp1) edge node {} (Rk);
\path [style={draw=black, dashed, thick}](Lk-1) edge node {} (Rk);
\path [style={draw=cyan, thick,  dashed}](Lk-1) edge node {} (Lk);
\path [style={draw=black,dashed, thick}](Lk-1) edge node {} (Rkp1);
\label{substructure}
\end{tikzpicture}
\caption{the generic $k$-th substructure $S_{(n,k)}$. We have not reported  $\mathcal{L}_n=\mathcal{R}_1$, which is central element of all $n-2$ substructures, as well as the one index central element $C_k$. Remember also that the generator $\mathcal{M}_k$ is a linear combination of the others (by virtue of  \eqref{Mk}).}
\label{fig1}
\end{center}
\end{figure}

\noindent Just as an explicit example, let us consider the $n=4$ case, so that two substructures arise: $S_{(4,2)}$ and $S_{(4,3)}$. The former,  which emerges with the choice of the three disjoint subsets $K_1^{(2)}=\{1\}$, $K_2^{(2)}=\{2\}$, $K_3^{(2)}=\{3,4\}$, is generated by the elements:
\begin{align}
\{\theta(C_1),\theta(C_2), \theta(C_3), \theta(C_{12}), \theta(C_{13}), \theta(C_{23}), \theta(C_{123})\} &\equiv \{C_1, C_2, C_{34}, C_{12}, C_{134}, C_{234}, C_{1234} \} \nonumber \\
&=\{\mathcal{L}_1, C_2, \mathcal{R}_3, \mathcal{L}_2, \mathcal{M}_2, \mathcal{R}_2, \mathcal{L}_4\}  \, ,\label{k2g}
\end{align}

\noindent with $\mathcal{L}_4=\mathcal{R}_1$ and $\mathcal{M}_{2}=\mathcal{L}_4-\mathcal{L}_2-\mathcal{R}_2+\mathcal{L}_1+C_2+\mathcal{R}_{3}$ by virtue of \eqref{Mk}.

\noindent The latter, which emerges instead by considering the disjoint subsets $K_1^{(3)}=\{1,2\}$, $K_2^{(3)}=\{3\}$, $K_3^{(3)}=\{4\}$, is generated by the elements:

\begin{align}
\{\theta(C_1),\theta(C_2), \theta(C_3), \theta(C_{12}), \theta(C_{13}), \theta(C_{23}), \theta(C_{123})\} &\equiv \{C_{12}, C_{3}, C_{4}, C_{123}, C_{124}, C_{34}, C_{1234} \} \nonumber \\
&=\{\mathcal{L}_2, C_3, \mathcal{R}_4,\mathcal{L}_3, \mathcal{M}_3, \mathcal{R}_3, \mathcal{L}_4\} \, , \label{k3g}
\end{align}
\noindent where $\mathcal{M}_3=\mathcal{L}_4-\mathcal{L}_3-\mathcal{R}_3+\mathcal{L}_2+C_3+\mathcal{R}_4$. Observing the two sets, we can see that the element $\mathcal{R}_3$, which appears as a central element of $S_{(4,2)}$, appears at the same time as a generator of $S_{(4,3)}$. The same happens for $\mathcal{L}_{2}$, which is central of $S_{(4,3)}$ and a generator of $S_{(4,2)}$.
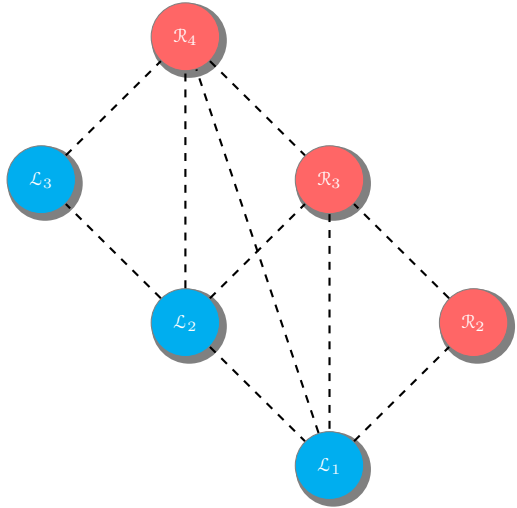
\begin{figure}[httb]
\begin{center}
\begin{tikzpicture}[node distance=1.25 cm and 1.25 cm]
\node[state, style={minimum size=1cm, scale=0.9, circular drop shadow, fill=cyan,draw=gray,text=white}] (L3)   {\footnotesize{$\mathcal{L}_{3}$}};
\node[state,style={minimum size=1cm, scale=0.9, circular drop shadow, fill=red!60,draw=gray,text=white}] [above right=of L3] (R4)     {\footnotesize{$\mathcal{R}_4$}};
\node[state,style={minimum size=1cm,scale=0.9, circular drop shadow,fill=cyan,draw=gray,text=white}] [below right=of L3] (L2)      {\footnotesize{$\mathcal{L}_{2}$}};
\node[state,style={minimum size=1cm, scale=0.9, circular drop shadow, fill=red!60,draw=gray,text=white}] [below right=of R4] (R3) {\footnotesize{$\mathcal{R}_3$}};
\node[state,style={minimum size=1cm, scale=0.9, circular drop shadow, fill=red!60,draw=gray,text=white}] [below right=of R3] (R2) {\footnotesize{$\mathcal{R}_2$}};
\node[state,style={minimum size=1cm, scale=0.9, circular drop shadow, fill=cyan,draw=gray,text=white}] [below right=of L2] (L1) {\footnotesize{$\mathcal{L}_1$}};
\node[state,style={minimum size=1cm, scale=0.9, fill=none,draw=none,text=white}] [below left=of L2] (V1) {};
\node[state,style={minimum size=1cm, scale=0.9, fill=none,draw=none,text=white}] [above right=of R3] (V2) {};
     	
     	\path [style={draw=black, dashed, thick}](L3) edge node {} (R4);
		\path [style={draw=red!60, thick, dashed}](R4) edge node {} (R3);
		\path [style={draw=black, dashed, thick}](L2) edge node {} (R3);
		\path [style={draw=cyan, thick, dashed}](L2) edge node {} (L3);
		\path [style={draw=black, dashed, thick}](L2) edge node {} (R4);
		\path [style={draw=cyan, thick, dashed}](L2) edge node {} (L1);
		\path [style={draw=black, dashed, thick}](L1) edge node {} (R2);
		\path [style={draw=red!60, thick, dashed}](R2) edge node {} (R3);
		\path [style={draw=black, dashed, thick}](R3) edge node {} (L1);
		\path [style={draw=black, dashed, thick}](L1) edge node {} (R4);

\end{tikzpicture}
	\caption{chain of quadratic algebras generated by the substructures $S_{(4,2)}$ and $S_{(4,3)}$ in dimension $n=4$. We have not reported  $\mathcal{L}_4=\mathcal{R}_1$, which is central element of both substructures, as well as the one index central elements $C_2, C_3$.}
	\label{fig2}
\end{center}
\end{figure}

\noindent Observing \figref{fig2} we can appreciate that at a fixed $n=4$, the element $\mathcal{R}_{i+1}$ (for a fixed $i=1, 2, 3$) Poisson commute with all the elements $\mathcal{L}_j$ with $j=1, \dots, i$. This is a general feature in the chain of these quadratic algebras and holds for any $n \geq 3$. The reason behind this is that the $(i+1)$-th right Casimir element has no common indices with the $i$ left Casimirs elements $\mathcal{L}_j$. This is a direct consequence of their definition on $\mathfrak{A} \otimes \mathfrak{A} \otimes \dots^{n)} \otimes \mathfrak{A}$.

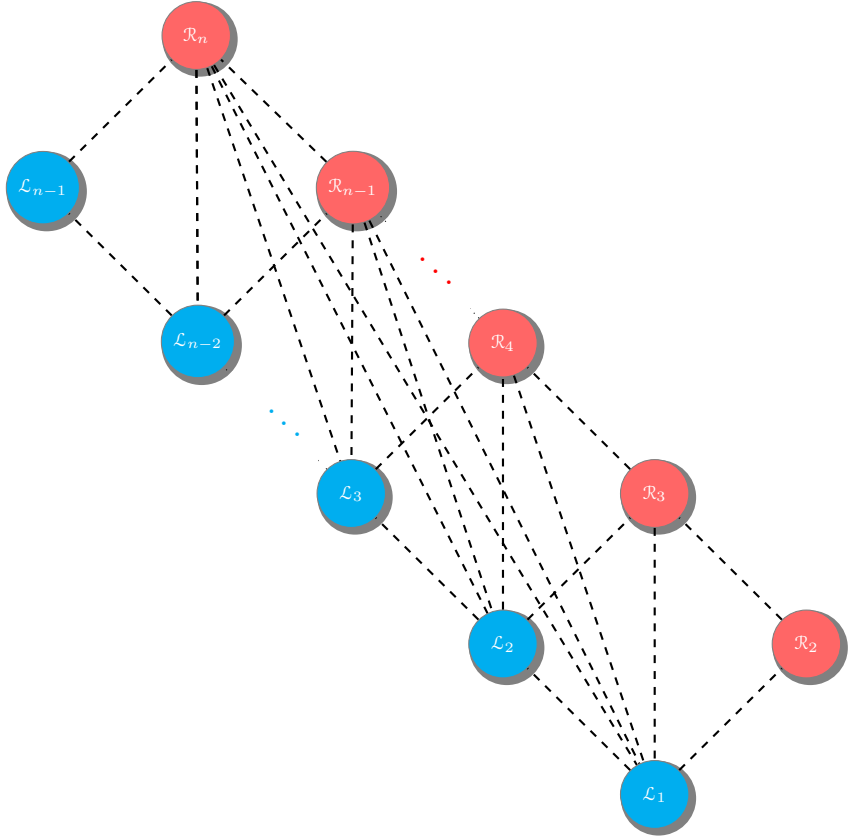
\begin{figure}[h!]
	\begin{center}
		\begin{tikzpicture}[node distance=1.35cm and 1.35 cm]
\node[state, style={minimum size=1cm, scale=0.9, circular drop shadow, fill=cyan,draw=gray,text=white}] (L1)  {\footnotesize{$\mathcal{L}_{1}$}};
\node[state,style={minimum size=1cm, scale=0.9, circular drop shadow, fill=red!60,draw=gray,text=white}] [above right=of	L1] (R2)     {\footnotesize{$\mathcal{R}_2$}};
\node[state,style={minimum size=1cm, scale=0.9, circular drop shadow, fill=cyan,draw=gray,text=white}] [above left=of	L1] (L2)     {\footnotesize{$\mathcal{L}_2$}};
\node[state,style={minimum size=1cm, scale=0.9, circular drop shadow, fill=red!60,draw=gray,text=white}] [above right=of	L2] (R3)     {\footnotesize{$\mathcal{R}_3$}};
\node[state,style={minimum size=1cm, scale=0.9, circular drop shadow, fill=cyan,draw=gray,text=white}] [above left=of	L2] (L3)     {\footnotesize{$\mathcal{L}_3$}};
\node[state,style={minimum size=1cm,scale=0.9, circular drop shadow,fill=red!60,draw=gray,text=white}] [above right=of	L3] (R4)      {\footnotesize{$\mathcal{R}_4$}};
\node[state,style={minimum size=1cm,scale=0.9, circular drop shadow,fill=cyan,draw=gray,text=white}] [above left=of L3] (Lnm2)      {\footnotesize{$\mathcal{L}_{n-2}$}};
\node[state,style={minimum size=1cm,scale=0.9, circular drop shadow,fill=red!60,draw=gray,text=white}] [above right=of Lnm2] (Rnm1)      {\footnotesize{$\mathcal{R}_{n-1}$}};
\node[state,style={minimum size=1cm,scale=0.9, circular drop shadow,fill=cyan,draw=gray,text=white}] [above left=of Lnm2] (Lnm1)      {\footnotesize{$\mathcal{L}_{n-1}$}};
\node[state,style={minimum size=1cm,scale=0.9, circular drop shadow,fill=red!60,draw=gray,text=white}] [above right=of Lnm1] (Rn)      {\footnotesize{$\mathcal{R}_{n}$}};		
\path [style={draw=black, dashed, thick}](L1) edge node {} (R2);
\path [style={draw=cyan,  thick, dashed}](L1) edge node {} (L2);
\path [style={draw=orange, dashed, thick}](L2) edge node {} (R3);
\path [style={draw=red!60,  thick, dashed}](R2) edge node {} (R3);
\path [style={draw=orange, dashed, thick}](R3) edge node {} (L1);
\path [style={draw=red!60,  thick, dashed}](R3) edge node {} (R4);
\path [style={draw=cyan,  thick, dashed}](L2) edge node {} (L3);
\path [style={draw=black, dashed, thick}](L3) edge node {} (R4);
\path [style={draw=black, dashed, thick}](L2) edge node {} (R4);
\path [style={draw=white, dotted }] (R4) edge node[anchor=center,fill=white, scale = 1.4] {$\small{\textcolor{white}{b}^{\color{red} {\ddots}}}$} (Rnm1);
\path [style={draw=white, loosely dotted}] (L3) edge node[anchor=center,fill=white, scale = 1.4] {$ \small{\textcolor{white}{a}^{\color{cyan} {\ddots}}}$} (Lnm2);
\path [style={draw=orange, dashed, thick}] (Rnm1) edge node {} (L3);
\path [style={draw=orange, dashed, thick}] (Rnm1) edge node {} (L2);
\path [style={draw=orange, dashed, thick}] (Rnm1) edge node {} (L1);
\path [style={draw=orange, dashed, thick}] (Rn) edge node {} (Lnm2);
\path [style={draw=black, dashed, thick}] (Rn) edge node {} (L3);
\path [style={draw=black, dashed, thick}] (Rn) edge node {} (L2);
\path [style={draw=black, dashed,thick }] (Rn) edge node {} (L1);
\path [style={draw=orange, dashed,thick}] (Lnm2) edge node {} (Rnm1);
\path [style={draw=cyan, thick, dashed}] (Lnm2) edge node {} (Lnm1);
\path [style={draw=black, dashed, thick}] (Lnm1) edge node {} (Rn);
\path [style={draw=red!60,thick, dashed}] (Rnm1) edge node {} (Rn);
\path [style={draw=black, thick, dashed}] (Rn) edge node {} (Lnm2);
\path [style={draw=black,dashed, thick}] (R4) edge node {} (L1);
 \label{substructures}
\end{tikzpicture}
\caption{chain of quadratic algebras composed by the $n-2$ substructures $S_{(n, k)}$. Again, the elements $\mathcal{L}_n=\mathcal{R}_1$ and $C_k$, with $k=2, \dots, n-1$, have been omitted.}
\label{fig3}
\end{center}
\end{figure}

\newpage \noindent If we focus on the generic $k$-th substructure in the chain, we can conclude that two elements, say $\{\mathcal{R}_k$, $\mathcal{R}_{k+1}\}$,  Poisson commute each other because belong to the set of right Casimirs (indicated in red), two other elements, $\{\mathcal{L}_k, \mathcal{L}_{k-1}\}$, Poisson commute each other because belong to the set of left Casimirs (indicated in blue) and the right element $R_{k+1}$ Poisson commute with the left ones in the $k$-th substructure because they have no common indices (this extends also to the left elements $\{\mathcal{L}_1, \dots, \mathcal{L}_{k-2}\}$ outside the $k$-th substructure).

To summarise, this is the chain of quadratic algebras generated by the left and right Casimir invariants. The main point of this result concerns the fact that behind the above chain structure, which we recall has been crucial in the quantum case to construct the spectrum of $n$D quantum superintegrable models algebraically \cite{10.1088/1751-8121/aac111}, there is in fact a chain of (Poisson) Racah algebras in disguise. They are generated by the elements in the set \eqref{eq:gene}. We conclude the discussion about the classical case observing that each of the $n-2$ substructures is endowed with a Casimir function. Specifically, at fixed $n \geq 3$, for each $k=2, \dots, n-1$ we can define the $k$-th Casimir associated to the $k$-th embedded Poisson algebra $R^{\{1, \dots, k-1\},\{k\},\{k+1, \dots, n\}}(3)$, which turns out to be:
\begin{equation}
\mathcal{K}_k = \mathcal{F}^2_{k}+\mathcal{L}_k \mathcal{M}_k \mathcal{R}_k+ (\mathcal{R}_{k+1}-C_k)(\mathcal{L}_{k-1}-\mathcal{L}_n)\mathcal{L}_k-(C_k-\mathcal{L}_{k-1})(\mathcal{R}_{k+1}-\mathcal{L}_n)\mathcal{R}_k \, .
\label{eq:casemb}
\end{equation}
By direct computation, at a fixed $n$ and $k$, it is verified that each Casimir Poisson commute with the corresponding elements in the set \eqref{eq:gene}.
\section{Quadratic substructures generated by the left and right Casimirs as rank one Racah subalgebras: The quantum case}
\label{sec4}
\noindent Our starting point for the quantum case is the Lie coalgebra $\mathfrak{A}$ in the basis $(J_\pm, J_3)$ with commutation rules:
\begin{equation}
	[J_-,J_+]= 2 \imath \hbar J_3 \quad [J_3, J_{\pm}]=\pm \imath \hbar  J_\pm \, ,
	\label{eq:Lie}
\end{equation}
endowed with the usual primitive coproduct $\Delta: \mathfrak{A} \to  \mathfrak{A} \otimes  \mathfrak{A} $ and non-linear Casimir:
\begin{equation}
	C (J_+,J_-,J_3)= J_3^2-\frac{1}{2}(J_+J_-+J_-J_+) \, .
	\label{nonlin}
\end{equation}

\noindent A one-dimensional differential realisation $\bar{D}$ for this Lie coalgebra is:
\begin{equation}
	\hat{J}_+^{[1]} := \bar{D}(J_+)= \frac{1}{2}\bigl(-\hbar^2\partial_{x_1}^2+\frac{a_1}{x_1^2}  \bigl)\quad \hat{J}_-^{[1]}: =\bar{D}(J_-)= \frac{1}{2}x_1^2\quad \hat{J}_3^{[1]} :=\bar{D}(J_3)=  -\frac{\imath \hbar}{2} (x_1 \cdot \partial_{x_1}+1/2) \, .
	\label{1Dqreal}
\end{equation}
In the given realisation, the Casimir operator reads:
\begin{equation}
	\hat{C}^{[1]}:=\bar{D}(C(J_+,J_-,J_3))=\frac{1}{16}(3 \hbar^2-4 a_1) \, .
	\label{eq:cas1reaq}
\end{equation}
A $n$D realisation is then given by $n$ copies of it:
\begin{equation}
\hat{J}_+^{[n]} = \frac{1}{2}\biggl(-\hbar^2\Delta+\sum_{j=1}^n\frac{a_j}{x_j^2} \biggl)\quad \hat{J}_-^{[n]} = \frac{1}{2}\bx^2\quad \hat{J}_3^{[n]} = -\frac{\imath \hbar}{2} (\bx \cdot \nabla+n/2) \, ,
\label{eq:nD}
\end{equation}
and, in total analogy to the classical case, the left and right partial Casimirs read:
\begin{equation}
\hat{C}^{[m]}=\sum_{1 \leq i< j}^m \hat{C}_{ij}-(m-2)\sum_{i=1}^m \hat{C}_i \qquad \hat{C}_{[m]}=\sum_{n-m+1 \leq i< j}^n \hat{C}_{ij}-(m-2)\sum_{i=n-m+1}^n \hat{C}_i \, ,
\label{lrcasq}
\end{equation}
\noindent where we introduced the quantities (the quantum analog of \eqref{eq:racahgen}):
\begin{equation}
\hat{C}_{ij}:=-\frac{1}{4}\biggl( \hat{L}_{ij}^2+a_i \frac{\hat{x}_j^2}{\hat{x}_i^2}+a_j \frac{\hat{x}_i^2}{\hat{x}_j^2} +a_i+a_j-\hbar^2\biggl) \qquad \hat{C}_{i}:= \frac{1}{16}(3 \hbar^2-4 a_i) \, ,
\label{Racahgen}
\end{equation}
with $\hat{L}_{ij} = \hat{x}_i \hat{p}_j -\hat{x}_j \hat{p}_i$ ($\hat{p}_i \equiv -\imath \hbar \partial_{x_i}$). As expected, we see that the two expressions \eqref{lrcasq} are formally equivalent to the ones in \eqref{lrRac}, what changes is just the different realisation used in the quantum case for the generators. Again, if we introduce the operators: $\hat{P}_{ij}:=\hat{C}_{ij}-\hat{C}_i-\hat{C}_j$ the Casimirs can be expressed as:
\begin{equation}
\hat{C}^{[m]}=\sum_{1 \leq i< j}^m \hat{P}_{ij}+\sum_{i=1}^m \hat{C}_i \, ,\qquad \hat{C}_{[m]}=\sum_{n-m+1 \leq i< j}^n \hat{P}_{ij}+\sum_{i=n-m+1}^n \hat{C}_i
\label{eq:qcas}
\end{equation}
and, with the new definitions $\hat{\mathcal{L}}_k := \hat{C}^{[k]}$ and $\hat{\mathcal{R}}_k := \hat{C}_{[n-k+1]}$, for $k=1, \dots, n$, we can construct the quantum analog of \eqref{lrpij}, which read:
\begin{equation}
\hat{\mathcal{L}}_k =\sum_{1 \leq i< j}^k \hat{P}_{ij}+\sum_{i=1}^k \hat{C}_i \, ,  \qquad  \hat{\mathcal{R}}_k := \sum_{k \leq i< j}^n \hat{P}_{ij}+\sum_{i=k}^n \hat{C}_i  \, .
\label{qanaloglr}
\end{equation}
Clearly, because of its underlying algebraic nature, the construction is very much the same as the one we performed for the classical case.  We need to take into account, besides the different realisation, the noncommutativity of the product of operators. In what follow, we limit ourselves to review the construction for the $n=3$ case and to rephrase Proposition \ref{prop1} in the quantum framework.
\subsection{The $\boldsymbol{n=3}$ case: the rank one Racah algebra $\boldsymbol{R(3)}$}
\label{sub4.1}
\noindent The generators of the  rank one Racah algebra are: $\{\hat{C}_1, \hat{C}_2, \hat{C}_3, \hat{C}_{12}, \hat{C}_{13}, \hat{C}_{23}, \hat{C}_{123} \}$ with:
\begin{equation}
\hat{C}_{123}=\hat{C}_{12}+\hat{C}_{13}+\hat{C}_{23}-\hat{C}_1-\hat{C}_2-\hat{C}_3  \, ,
\label{lineareqq}
\end{equation}
where, as asual, $\hat{C}_1$, $\hat{C}_2$, $\hat{C}_3$ and $\hat{C}_{123}$ are central elements.
\noindent Now, by defining:
\begin{equation}
\hat{\mathcal{F}}:=\frac{1}{2 \imath \hbar}[\hat{C}_{12},\hat{C}_{23}]=\frac{1}{2 \imath \hbar}[\hat{C}_{23},\hat{C}_{13}]=\frac{1}{2 \imath \hbar}[\hat{C}_{13},\hat{C}_{12}]
\label{funFq}
\end{equation}
\noindent we get the following defining commutation relations:
\begin{align}
&[\hat{C}_{12},\hat{\mathcal{F}}]/\imath \hbar=\hat{C}_{23}\hat{C}_{12}-\hat{C}_{12}\hat{C}_{13}+(\hat{C}_2-\hat{C}_1)(\hat{C}_3-\hat{C}_{123}) \label{1racq}\\
&[\hat{C}_{23},\hat{\mathcal{F}}]/\imath \hbar=\hat{C}_{13}\hat{C}_{23}-\hat{C}_{23}\hat{C}_{12}+(\hat{C}_3-\hat{C}_2)(\hat{C}_1-\hat{C}_{123})  \label{2racq}\\
&[\hat{C}_{13},\hat{\mathcal{F}}]/\imath \hbar=\hat{C}_{12}\hat{C}_{13}-\hat{C}_{13}\hat{C}_{23}+(\hat{C}_1-\hat{C}_3)(\hat{C}_2-\hat{C}_{123}) \, .  \label{3racq}
\end{align}
From \eqref{lineareqq}, using the properties of the commutators, we obtain the relations:
\begin{equation}
[\hat{C}_{23}, \hat{C}_{12}+\hat{C}_{13}]=0 \qquad [\hat{C}_{12}, \hat{C}_{13}+\hat{C}_{23}]=0 \qquad [\hat{C}_{13}, \hat{C}_{12}+\hat{C}_{23}]=0 \, .
\label{eq:DKq}
\end{equation}
Also, the following sum is easily checked to be zero:
\begin{equation}
[\hat{C}_{12},\hat{\mathcal{F}}]+[\hat{C}_{23},\hat{\mathcal{F}}]+[\hat{C}_{13},\hat{\mathcal{F}}]=0 \, .
\label{sumq}
\end{equation}
\noindent Again, if we use \eqref{lineareqq} to explicitate the $\hat{C}_{13}$ generator, i.e.:
\begin{equation}
\hat{C}_{13}=\hat{C}_{123}-\hat{C}_{12}-\hat{C}_{23}+\hat{C}_1+\hat{C}_2+\hat{C}_3 \,
\label{c13q}
\end{equation}
and we substitute this expressions into  \eqref{1racq}-\eqref{2racq}, taking into account \eqref{sumq}, we get:
\begin{align}
&[\hat{C}_{12},\hat{\mathcal{F}}]/\imath \hbar= +\hat{C}_{12}^2+\{\hat{C}_{23},\hat{C}_{12}\}-(\hat{C}_{123}+\hat{C}_1+\hat{C}_2+\hat{C}_3)\hat{C}_{12}+(\hat{C}_2-\hat{C}_1)(\hat{C}_3-\hat{C}_{123})\\
&[\hat{C}_{23},\hat{\mathcal{F}}]/\imath \hbar= -\hat{C}_{23}^2-\{\hat{C}_{12},\hat{C}_{23}\}+(\hat{C}_{123}+\hat{C}_1+\hat{C}_2+\hat{C}_3)\hat{C}_{23}+(\hat{C}_3-\hat{C}_2)(\hat{C}_1-\hat{C}_{123}) \, ,
\end{align}
which represents the quantum analog of  \eqref{eq:sub321}-\eqref{eq:sub322}. Here $\{\hat{A},\hat{B}\}:=\hat{A}\hat{B}+\hat{B}\hat{A}$ is the anti-commutator. In total analogy to the classical case, we can then rewrite the substructure $(n,k)=(3,2)$, which we indicate as $S^{(\hbar)}_{(3,2)}$, as:
\begin{align}
&[\hat{\mathcal{L}}_2,\hat{\mathcal{F}}]/\imath \hbar=+\hat{\mathcal{L}}_2^2+\{\hat{\mathcal{R}}_2, \mathcal{\hat{L}}_2\}-(\mathcal{\hat{L}}_3+\hat{\mathcal{L}}_{1}+\hat{C}_2+\hat{\mathcal{R}}_{3})\hat{\mathcal{L}}_2+(\hat{C}_2-\hat{\mathcal{L}}_{1})(\hat{\mathcal{R}}_{3}-\hat{\mathcal{L}}_3) \label{s32q1}\\
&[\hat{\mathcal{R}}_2,\hat{\mathcal{F}}]/\imath \hbar=-\hat{\mathcal{R}}_2^2-\{\hat{\mathcal{L}}_2, \hat{\mathcal{R}}_2\}+(\hat{\mathcal{L}}_3+\hat{\mathcal{L}}_{1}+\hat{C}_2+\hat{\mathcal{R}}_{3})\hat{\mathcal{R}}_2+(\hat{\mathcal{R}}_{3}-\hat{C}_{2})(\hat{\mathcal{L}}_{1}-\hat{\mathcal{L}}_3) \label{s32q2} \, ,
\end{align}
with:
\begin{equation}
\hat{\mathcal{F}}:=\frac{1}{2 \imath \hbar}[\hat{\mathcal{L}}_2,\hat{\mathcal{R}}_2]=\frac{1}{2 \imath \hbar}[\hat{\mathcal{R}}_2,\hat{\mathcal{M}}_2]=\frac{1}{2 \imath \hbar}[\hat{\mathcal{M}}_2,\hat{\mathcal{L}}_2]\, ,
\label{FLRijkq}
\end{equation}
and the usual definition $\hat{\mathcal{M}}_2:=\hat{\mathcal{L}}_3-\hat{\mathcal{L}}_2-\hat{\mathcal{R}}_2+\hat{\mathcal{L}}_1+\hat{C}_2+\hat{\mathcal{R}}_3$ for the generator $\hat{C}_{13}$. The relations \eqref{s32q1}-\eqref{s32q2}-\eqref{FLRijkq} represent the quantum analog of \eqref{s32c1}-\eqref{s32c2}-\eqref{FLRijk}.

\subsection{Higher dimensional quantum 
substructures, injective morphisms and rank one Racah algebras $\boldsymbol{R^{K_1,K_2,K_3}(3)}$}
\label{sub4.2}
\noindent Once the link with the Racah algebra $R(3)$ has been found, we can proceed with an analogous construction for the quantum substructures. In particular, we can formulate the following:
\begin{prop}
\label{prop2}
Let $[n]:=\{1, \dots, n\}$ be a set with $n>3$ fixed. For each $k=2, \dots, n-1$ define $K_1^{(k)}:=\{1,\dots, k-1\}$, $K_2^{(k):}=\{k\}$ and $K_3^{(k)}:=\{k+1, \dots, n\}$ to be three disjoint subsets of the set $[n]$. Then, the images of the $n-2$ injective morphisms $\theta_k:$  $R(3) \to R(n)$ $(k=2, \dots, n-1)$:
\begin{align*}
&\theta_k(C_1)=C_{K_1^{(k)}} \equiv \hat{C}_{1 \dots k-1}=\hat{\mathcal{L}}_{k-1}\\
&\theta_k(C_2)=C_{K_2^{(k)}}\equiv\hat{C}_k\ \\
&\theta_k(C_3)=C_{K_3^{(k)}}\equiv\hat{C}_{k+1 \dots n}=\hat{\mathcal{R}}_{k+1}\\
&\theta_k(C_{12})=C_{K_1^{(k)} K_2^{(k)}}\equiv\hat{C}_{1 \dots k}=\hat{\mathcal{L}}_k\\
&\theta_k(C_{23})=C_{K_2^{(k)} K_3^{(k)}}\equiv\hat{C}_{k \dots n}=\hat{\mathcal{R}}_k\\
&\theta_k(C_{13})=C_{K_1^{(k)} K_3^{(k)}}\equiv\hat{C}_{1 \dots k-1 k+1 \dots n}=\hat{\mathcal{M}}_k\\
&\theta_k(C_{123})=C_{K_1^{(k)} K_2^{(k)} K_3^{(k)}}\equiv\hat{C}_{1\dots n}=\hat{\mathcal{L}}_n=\hat{\mathcal{R}}_1
\end{align*}
\noindent with: $$\hat{\mathcal{M}}_{k}:=\hat{\mathcal{L}}_n-\hat{\mathcal{L}}_k-\hat{\mathcal{R}}_k+\hat{\mathcal{L}}_{k-1}+\hat{C}_k+\hat{\mathcal{R}}_{k+1} \, ,$$
result in the universal quantum substructures $S^{(\hbar)}_{(n,k)}$. Moreover, as a direct consequence of Lemma \ref{Lemma} they are isomorphic to the rank one Racah algebra $R(3)$.
\end{prop}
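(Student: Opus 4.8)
The plan is to transcribe the proof of Proposition \ref{prop1} into the associative (quantum) setting, replacing each Poisson bracket $\{\cdot,\cdot\}$ by $\frac{1}{\imath\hbar}[\cdot,\cdot]$ and keeping careful track of operator ordering. The starting observation is that Lemma \ref{Lemma} is a statement about the abstract Racah algebra $R(3)$ and its morphisms into $R(n)$, and is therefore insensitive to whether we realise the generators classically or quantum-mechanically. Hence, for a fixed $n>3$ and for each $k=2,\dots,n-1$, the three disjoint subsets $K_1^{(k)}=\{1,\dots,k-1\}$, $K_2^{(k)}=\{k\}$, $K_3^{(k)}=\{k+1,\dots,n\}$ define an injective morphism $\theta_k:R(3)\to R(n)$, and applying the Lemma $n-2$ times produces $n-2$ images $R^{K_1^{(k)},K_2^{(k)},K_3^{(k)}}(3)$, each isomorphic to the rank one Racah algebra and generated by the seven operators listed in the statement.

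Next I would record the identification of these image generators with the partial Casimir operators, namely $\theta_k(C_{12})=\hat{\mathcal{L}}_k$, $\theta_k(C_{23})=\hat{\mathcal{R}}_k$, $\theta_k(C_{13})=\hat{\mathcal{M}}_k$, together with the four central generators $\hat{\mathcal{L}}_{k-1}$, $\hat{C}_k$, $\hat{\mathcal{R}}_{k+1}$ and $\hat{\mathcal{L}}_n=\hat{\mathcal{R}}_1$. Lifting the quantum linear relation \eqref{lineareqq} through $\theta_k$ gives $\hat{\mathcal{L}}_n=\hat{\mathcal{L}}_k+\hat{\mathcal{R}}_k+\hat{\mathcal{M}}_k-\hat{\mathcal{L}}_{k-1}-\hat{C}_k-\hat{\mathcal{R}}_{k+1}$, which exhibits $\hat{\mathcal{M}}_k$ as a linear combination of the others; the lifted commuting relations \eqref{eq:DKq} and the sum-to-zero identity \eqref{sumq} carry over in the same way.

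The substantive step reproduces the $n=3$ computation of Subsection \ref{sub4.1} for general $k$: introduce $\hat{\mathcal{F}}_k$ as the quantum analog of \eqref{funFq}, explicitate $\hat{\mathcal{M}}_k$ exactly as $\hat{C}_{13}$ was eliminated in \eqref{c13q}, and substitute into the lifted analogues of \eqref{1racq}--\eqref{2racq}, using the sum-to-zero identity to simplify. The crucial difference from the commutative case is that the cross term which classically reads $2\hat{\mathcal{R}}_k\hat{\mathcal{L}}_k$ must be symmetrised into the anticommutator $\{\hat{\mathcal{R}}_k,\hat{\mathcal{L}}_k\}=\hat{\mathcal{R}}_k\hat{\mathcal{L}}_k+\hat{\mathcal{L}}_k\hat{\mathcal{R}}_k$, precisely as in \eqref{s32q1}--\eqref{s32q2}. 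Carrying this through yields the defining relations of the quantum substructures $S^{(\hbar)}_{(n,k)}$, establishing both the realisation claim and, via Lemma \ref{Lemma}, the isomorphism with $R(3)$.

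I expect the main obstacle to be precisely this ordering bookkeeping. In the Poisson case one may freely reorder the quadratic terms, whereas here one must follow each commutator $[\hat{C}_{K_iK_j},\hat{C}_{K_jK_l}]$ and collect the resulting anticommutators without inadvertently commuting noncommuting factors. A clean way to control this is to verify first that the four central generators genuinely commute with every operator of the $k$-th block, so that they may be moved at will, and then to check the whole computation against the $\hbar\to 0$ limit, in which each anticommutator $\{A,B\}$ degenerates to $2AB$ and Proposition \ref{prop1} is recovered term by term.
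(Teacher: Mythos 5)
Your proposal is correct and follows essentially the same route as the paper's own proof: invoke Lemma \ref{Lemma} through the $n-2$ morphisms $\theta_k$ to get the isomorphism with $R(3)$, identify the image generators with the quantum left and right Casimirs, lift the linear relation \eqref{lineareqq} and the identities \eqref{eq:DKq}, \eqref{sumq} to eliminate $\hat{\mathcal{M}}_k$, and substitute into the lifted Racah relations while symmetrising the quadratic cross terms into anticommutators, exactly as in \eqref{qsubstructure1}--\eqref{qsubstructure2}. Your closing sanity checks (centrality of the four fixed generators, recovery of Proposition \ref{prop1} in the $\hbar\to 0$ limit) mirror the paper's subsequent remark and add nothing inconsistent.
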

\begin{proof}
For each $k=2, \dots, n-1$, let $K_1^{(k)}:=\{1,\dots, k-1\}$, $K_2^{(k):}=\{k\}$ and $K_3^{(k)}:=\{k+1, \dots, n\}$ be three specific disjoint subsets of $[n]$. With this choice, the quantum generators are:
	\begin{equation}
	\{\hat{\mathcal{L}}_{k-1}, \hat{C}_k, \hat{\mathcal{R}}_{k+1},  \hat{\mathcal{L}}_k,  \hat{\mathcal{R}}_k,  \hat{\mathcal{M}}_k,  \hat{\mathcal{L}}_n= \hat{\mathcal{R}}_1\},
	\label{eq:geneq}
	\end{equation}
$\hat{\mathcal{L}}_{k-1}$, $\hat{C}_k$, $\hat{\mathcal{R}}_{k+1}$ and $\hat{\mathcal{L}}_n=\hat{\mathcal{R}}_1$ being central elements associated to the $k$-th substructure.	The linear relation \eqref{c13q} is lifted to the $n-2$ linear relations:
	\begin{equation}
	\hat{\mathcal{L}}_n = \hat{\mathcal{L}}_{k}+\hat{\mathcal{R}}_{k}+\hat{\mathcal{M}}_{k}-\hat{\mathcal{L}}_{k-1}-\hat{C}_k-\hat{\mathcal{R}}_{k+1}=\hat{\mathcal{R}}_1 \, ,
	\label{lineqq}
	\end{equation}
and the rank one Racah subalgebras turn out to be:
\begin{align}
&[\hat{\mathcal{L}}_k,\hat{\mathcal{F}}_k]/\imath \hbar\,=\hat{\mathcal{R}}_k\hat{\mathcal{L}}_k-\hat{\mathcal{L}}_k \hat{\mathcal{M}}_k\,+(\hat{C}_k-\hat{\mathcal{L}}_{k-1})(\hat{\mathcal{R}}_{k+1}-\hat{\mathcal{L}}_n) \label{racahq1}\\
&[\hat{\mathcal{R}}_k,\hat{\mathcal{F}}_k]/\imath \hbar\,=\hat{\mathcal{M}}_k\hat{\mathcal{R}}_k-\hat{\mathcal{R}}_k\hat{\mathcal{L}}_k\,+(\hat{\mathcal{R}}_{k+1}-\hat{C}_k)(\hat{\mathcal{L}}_{k-1}-\hat{\mathcal{L}}_n) \label{racahq2}\\
&[\hat{\mathcal{M}}_k,\hat{\mathcal{F}}_k]/\imath \hbar=\hat{\mathcal{L}}_k\hat{\mathcal{M}}_k-\hat{\mathcal{M}}_k\hat{\mathcal{R}}_k+(\hat{\mathcal{L}}_{k-1}-\hat{\mathcal{R}}_{k+1})(\hat{C}_k-\hat{\mathcal{L}}_{n}) \label{racahq3} \, ,
\end{align}
	where:
	\begin{equation}
	\hat{\mathcal{F}}_k:=\frac{1}{2 \imath \hbar}[\hat{\mathcal{L}}_k,\hat{\mathcal{R}}_k]=\frac{1}{2 \imath \hbar}[\hat{\mathcal{R}}_k,\hat{\mathcal{M}}_{k}]=\frac{1}{2 \imath \hbar}[\hat{\mathcal{M}}_{k},\hat{\mathcal{L}}_k] \, .
	\label{funq}
	\end{equation}
	The relations given in \eqref{eq:DKq}, are lifted to the following ones:
	\begin{equation}
	[\hat{\mathcal{R}}_k, \hat{\mathcal{L}}_{k}+\hat{\mathcal{M}}_{k}]=0 \, , \qquad [\hat{\mathcal{L}}_k, \hat{\mathcal{M}}_{k}+\hat{\mathcal{R}}_{k}]=0 \, , \qquad [\hat{\mathcal{M}}_{k}, \hat{\mathcal{L}}_{k}+\hat{\mathcal{R}}_{k}]=0 \, .
	\label{DKq}
	\end{equation}
	Moreover, relation \eqref{sumq}, is now rephrased as:
	
	\begin{equation}
	[\hat{\mathcal{L}}_k,\hat{\mathcal{F}}_k]+[\hat{\mathcal{R}}_k, \hat{\mathcal{F}}_k]+[\hat{\mathcal{M}}_k, \hat{\mathcal{F}}_k]=0 \,.
	\label{sumqq}
	\end{equation}
	In total analogy to the three dimensional case, we use \eqref{lineqq} to explicitate the $\hat{\mathcal{M}}_k$ generators, i.e.:
	\begin{equation}
	\hat{\mathcal{M}}_{k}=	\hat{\mathcal{L}}_n-	\hat{\mathcal{L}}_k-	\hat{\mathcal{R}}_k+	\hat{\mathcal{L}}_{k-1}+	\hat{C}_k+	\hat{\mathcal{R}}_{k+1} \,
	\label{Mkq}
	\end{equation}
	and we substitute these expressions into  \eqref{racahq1}-\eqref{racahq2}, taking into account \eqref{sumqq}. In this way we get:
	\begin{align}
	&[\hat{\mathcal{L}}_k,\hat{\mathcal{F}}_k]/\imath \hbar=+\hat{\mathcal{L}}_k^2+\{\hat{\mathcal{R}}_k, \hat{\mathcal{L}}_k\}-(\hat{\mathcal{L}}_n+\hat{\mathcal{L}}_{k-1}+\hat{C}_k+\hat{\mathcal{R}}_{k+1})\hat{\mathcal{L}}_k+(\hat{C}_k-\hat{\mathcal{L}}_{k-1})(\hat{\mathcal{R}}_{k+1}-\hat{\mathcal{L}}_n) \label{qsubstructure1}\\
	&[\mathcal{R}_k,\hat{\mathcal{F}}_k]/\imath \hbar=-\hat{\mathcal{R}}_k^2-\{\hat{\mathcal{L}}_k, \hat{\mathcal{R}}_k\}+(\hat{\mathcal{L}}_n+\hat{\mathcal{L}}_{k-1}+\hat{C}_k+\hat{\mathcal{R}}_{k+1})\hat{\mathcal{R}}_k+(\hat{\mathcal{R}}_{k+1}-\hat{C}_{k})(\hat{\mathcal{L}}_{k-1}-\hat{\mathcal{L}}_n) \label{qsubstructure2} \, ,
	\end{align}
	\noindent where $\hat{\mathcal{L}}_n=\hat{C}^{[n]}=\hat{C}_{[n]}=\hat{\mathcal{R}}_1$. At a fixed $n$, for each $k=2, \dots, n-1$, these commutation relations define the universal quadratic substructures $S^{(\hbar)}_{(n,k)}$ generated by the quantum left and right Casimir invariants.
\end{proof}
\begin{rem}
	The result obtained in the classical case is recovered by performing the appropriate $\hbar \to 0$ limit. In this limit, in fact,  \eqref{csubstructure1}-\eqref{csubstructure2} arise from \eqref{qsubstructure1}-\eqref{qsubstructure2} by replacing the commutator $[\hat{A}, \hat{B}]/\imath \hbar$ with the Poisson bracket $\{A, B\}$ and by taking into account that $\{\hat{A},\hat{B}\} \to 2 AB$. In short, $\lim_{\hbar \to 0} S_{(n,k)}^{(\hbar)} = S_{(n,k)}$. This holds true also for all the other commutation relations appearing in the construction.
\end{rem}
\noindent Also in the quantum setting it is possible to construct the $k$-th Casimir associated to the $k$-th substructure. It can be cast in the form:
\begin{align}
\hat{\mathcal{K}}_k = \hat{\mathcal{F}}^2_{k}&+\frac{1}{6}\{\hat{\mathcal{L}}_k, \hat{\mathcal{M}}_k, \hat{\mathcal{R}}_k\}+ (\hat{\mathcal{R}}_{k+1}-\hat{C}_k)(\hat{\mathcal{L}}_{k-1}-\hat{\mathcal{L}}_n)\hat{\mathcal{L}}_k-(\hat{C}_k-\hat{\mathcal{L}}_{k-1})(\hat{\mathcal{R}}_{k+1}-\hat{\mathcal{L}}_n)\hat{\mathcal{R}}_k \nonumber\\
&+\frac{\hbar^2}{3}(\{\hat{\mathcal{L}}_k,\hat{\mathcal{M}}_k\}+\{\hat{\mathcal{L}}_k,\hat{\mathcal{R}}_k\}+\{\hat{\mathcal{M}}_k,\hat{\mathcal{R}}_k\}+(\hat{C}_k-\hat{\mathcal{L}}_{k-1})(\hat{\mathcal{R}}_{k+1}-\hat{\mathcal{L}}_n)-(\hat{\mathcal{R}}_{k+1}-\hat{C}_k)(\hat{\mathcal{L}}_{k-1}-\hat{\mathcal{L}}_n)) \, ,
\label{eq:casq}
\end{align}
where we introduced the symmetrizer of three operators $\{\hat{A},\hat{B},\hat{C}\} := \hat{A}\hat{B}\hat{C}+\hat{A}\hat{C}\hat{B}+\hat{B}\hat{A}\hat{C}+\hat{B}\hat{C}\hat{A}+\hat{C}\hat{A}\hat{B}+\hat{C}\hat{B}\hat{A}$. By direct computation, at a fixed $n$ and $k$, it is verified that each of the above $n-2$ Casimir invariants commutes with the corresponding elements in the set \eqref{eq:geneq}.

\section{Concluding Remarks}
\label{sec5}
\noindent In this paper we have reexamined the Racah algebra $R(n)$, a rank-($n-2)$ quadratic algebra, from the perspective of the subalgebra structures which are useful in the algebraic derivation of spectrum for $n$-dimensional superintegrable systems \cite{10.1088/1751-8121/aac111, Latini_2019, latini2020embedding}. We have determined the commutation relations of multi-indexed generators of the Racah algebra and related these generators of substructures to the left and right Casimirs which arise from coalgebra symmetry. In particular, we have showed that the $n-2$ classical/quantum quadratic substructures generated by the left and right classical/quantum integrals, coming from the application of the left and right $m$-th coproduct maps to the “seed”Casimir of the coalgebra, can be understood as a classical/quantum realisation of $n - 2$ rank one Racah algebras. We explicitly shown the role played by the family of injective morphisms in the construction of the universal substructures. As a byproduct, we have obtained a new presentation for each of the $n - 2$ quadratic algebras generated by the left and right integrals.

Our results shed new light on the idea of substructures which was originally introduced by Daskaloyannis in \cite{2011SIGMA...7..054T} for a $3$-dimensional system and then applied to models on $n$-dimensional Euclidean spaces as well as on spaces of constant curvatures \cite{Latini_2019, 10.1088/1751-8121/aac111}. The advantage of this approach for calculating spectrum is that it relies only on certain sets of constraints. The present work allows us to understand more generally how to construct substructures and algebraically derive the spectra for other superintegrable systems in dimension greater than 2, which is a difficult problem as our knowledge to the representation theory of higher rank quadratic algebras is still rather limited.

These ideas would also be applicable with certain modifications to models obtained from  block separation of variables \cite{Chen_2019}. Such models possess quadratic algebra structures with structure constants depending on additional central elements given by Casimirs of some higher rank Lie algebras.

\section*{Acknowledgement}
IM was supported by Australian Research Council Future Fellowship FT180100099. YZZ was supported by Australian Research Council Discovery Project DP190101529 and National Natural Science Foundation of China (Grant No. 11775177).

\addcontentsline{toc}{chapter}{Bibliography}
\bibliographystyle{utphys}
\bibliography{bibliography}

\end{document}